\documentclass[a4paper,UKenglish,cleveref,autoref, thm-restate]{lipics-v2021}

\title{Permutation Match Puzzles: How Young Tanvi Learned About Computational Complexity}

\titlerunning{Permutation Match Puzzles}

\author{Kshitij Gajjar}{IIIT Hyderabad, India}{kshitij@iiit.ac.in}{0000-0003-0890-199X}{}
\author{Neeldhara Misra}{IIT Gandhinagar, India}{neeldhara.m@iitgn.ac.in}{0000-0003-1727-5388}{}

\authorrunning{K. Gajjar and N. Misra}

\Copyright{Kshitij Gajjar and Neeldhara Misra}

\ccsdesc[500]{Theory of computation~Design and analysis of algorithms}
\ccsdesc[300]{Mathematics of computing~Combinatorics}

\keywords{sorting match puzzles, permutation match puzzles, grid puzzles, constraint satisfaction, directed acyclic graphs, hook length formula, standard Young tableaux, NP-completeness, feedback arc set}

\category{} 

\relatedversion{} 

\acknowledgements{We would like to thank the Center for Creative Learning (\href{https://ccl.iitgn.ac.in/}{CCL}) at IIT Gandhinagar for introducing us to this puzzle. We also acknowledge the use of an AI assistant (\href{https://claude.ai/}{Claude}) for the preparation of this draft. We are also grateful to the reviewers of FUN 2026 for their feedback.}

\nolinenumbers 

\EventEditors{John Iacono}
\EventNoEds{1}
\EventLongTitle{13th International Conference on Fun with Algorithms (FUN 2026)}
\EventShortTitle{FUN 2026}
\EventAcronym{FUN}
\EventYear{2026}
\EventDate{May 18--22, 2026}
\EventLocation{Porquerolles, France}
\EventLogo{}
\SeriesVolume{366}
\ArticleNo{38}

\usepackage{tikz}
\usetikzlibrary{arrows,shapes,positioning,calc}
\usetikzlibrary{arrows.meta, decorations.markings}
\newenvironment{mydefinition}[2]
  {\vspace{0.5\baselineskip}\begin{definition}[#1]\label{#2}}
  {\end{definition}}

\let\oldtheorem\theorem
\let\endoldtheorem\endtheorem
\renewenvironment{theorem}{\vspace{0.5\baselineskip}\oldtheorem}{\endoldtheorem}

\let\oldlemma\lemma
\let\endoldlemma\endlemma
\renewenvironment{lemma}{\vspace{0.5\baselineskip}\oldlemma}{\endoldlemma}

\let\oldobservation\observation
\let\endoldobservation\endobservation
\renewenvironment{observation}{\vspace{0.5\baselineskip}\oldobservation}{\endoldobservation}

\let\oldcorollary\corollary
\let\endoldcorollary\endcorollary
\renewenvironment{corollary}{\vspace{0.5\baselineskip}\oldcorollary}{\endoldcorollary}

\let\oldproposition\proposition

\renewenvironment{proposition}{\vspace{0.5\baselineskip}\oldproposition}{\endproposition}
\usepackage{eulervm,framed}

\begin{document}

\maketitle

\begin{abstract}
We study a family of sorting match puzzles on grids, which we call permutation match puzzles. In this puzzle, each row and column of a $n \times n$ grid is labeled with an ordering constraint $\text{---}$ ascending (A) or descending (D) $\text{---}$ and the goal is to fill the grid with the numbers 1 through $n^2$ such that each row and column respects its constraint.

We provide a complete characterization of solvable puzzles: a puzzle admits a solution if and only if its associated constraint graph is acyclic, which translates to a simple ``at most one switch'' condition on the A/D labels. When solutions exist, we show that their count is given by a hook length formula. For unsolvable puzzles, we present an $O(n)$ algorithm to compute the minimum number of label flips required to reach a solvable configuration. Finally, we consider a generalization where rows and columns may specify arbitrary permutations rather than simple orderings, and establish that finding minimal repairs in this setting is NP-complete by a reduction from feedback arc set.
\end{abstract}

    \begin{figure}
    \centering
    \includegraphics[width=0.45\textwidth]{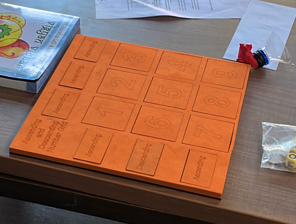}
    \caption{A 3 × 3 sorting match puzzle: the physical edition that inspired this work}
    \label{fig:puzzle-photo}
    \end{figure}
    
\section{Introduction}

Tanvi, while wandering the corridors of IIT Gandhinagar, ambles into the Centre for Creative Learning (CCL), where she is intrigued by a really simple looking puzzle --- also shown in~\Cref{fig:puzzle-photo}. From the looks of it, it is a $3 \times 3$ grid on a foam board with nine removable pieces labeled from $1$ to $9$. It reminds her of a magic square, but here each row and column is labelled by the word \emph{ascending} or \emph{descending} --- it turns out these are also removable bits that can be flipped over to change the label, so the board appears to hold sixty-four puzzles in all! 

She wonders what the puzzle demands. After asking around, she finds that the goal is to reorganize the numbers in the grid such that each row and column is in the appropriate order. She wonders if there is a way to solve each of the sixty-four puzzles in a systematic way. A playable version of this puzzle can be found \href{https://interactives.neeldhara.com/puzzles/asc-desc-grid}{here}.

In this work, we explore general versions of this puzzle, providing a fairly complete picture of questions of solvability, counting solutions, and the issue of optimally perturbing unsolvable puzzles to make them solvable.

Consider the following natural generalization of the $3 \times 3$ puzzle found at CCL: given an $n \times n$ grid where each row and column is labeled either A (ascending) or D (descending), fill the grid with numbers from 1 to $n^2$ such that each row (respectively, column) labeled A must contain numbers in ascending order and each row (respectively, column) labeled D must contain numbers in descending order.
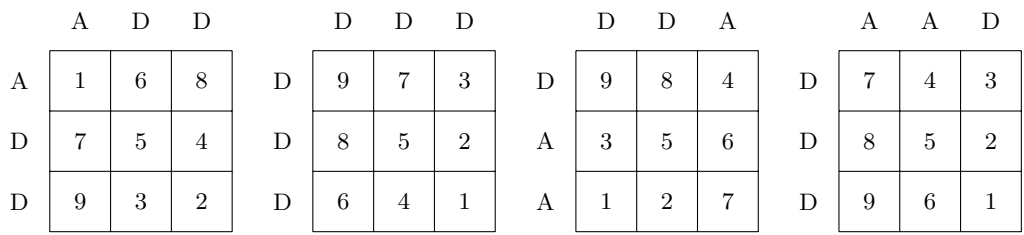
\begin{figure}[h]
    \centering
    \begin{tabular}{cccc}
    \begin{tikzpicture}[scale=0.8]
    \draw (0,0) grid (3,3);
    \node at (-0.5,2.5) {A}; \node at (-0.5,1.5) {D}; \node at (-0.5,0.5) {D};
    \node at (0.5,3.5) {A}; \node at (1.5,3.5) {D}; \node at (2.5,3.5) {D};
    \node at (0.5,2.5) {1}; \node at (1.5,2.5) {6}; \node at (2.5,2.5) {8};
    \node at (0.5,1.5) {7}; \node at (1.5,1.5) {5}; \node at (2.5,1.5) {4};
    \node at (0.5,0.5) {9}; \node at (1.5,0.5) {3}; \node at (2.5,0.5) {2};
    \end{tikzpicture}
    &
    \begin{tikzpicture}[scale=0.8]
    \draw (0,0) grid (3,3);
    \node at (-0.5,2.5) {D}; \node at (-0.5,1.5) {D}; \node at (-0.5,0.5) {D};
    \node at (0.5,3.5) {D}; \node at (1.5,3.5) {D}; \node at (2.5,3.5) {D};
    \node at (0.5,2.5) {9}; \node at (1.5,2.5) {7}; \node at (2.5,2.5) {3};
    \node at (0.5,1.5) {8}; \node at (1.5,1.5) {5}; \node at (2.5,1.5) {2};
    \node at (0.5,0.5) {6}; \node at (1.5,0.5) {4}; \node at (2.5,0.5) {1};
    \end{tikzpicture}
    &
    \begin{tikzpicture}[scale=0.8]
    \draw (0,0) grid (3,3);
    \node at (-0.5,2.5) {D}; \node at (-0.5,1.5) {A}; \node at (-0.5,0.5) {A};
    \node at (0.5,3.5) {D}; \node at (1.5,3.5) {D}; \node at (2.5,3.5) {A};
    \node at (0.5,2.5) {9}; \node at (1.5,2.5) {8}; \node at (2.5,2.5) {4};
    \node at (0.5,1.5) {3}; \node at (1.5,1.5) {5}; \node at (2.5,1.5) {6};
    \node at (0.5,0.5) {1}; \node at (1.5,0.5) {2}; \node at (2.5,0.5) {7};
    \end{tikzpicture}
    &
    \begin{tikzpicture}[scale=0.8]
    \draw (0,0) grid (3,3);
    \node at (-0.5,2.5) {D}; \node at (-0.5,1.5) {D}; \node at (-0.5,0.5) {D};
    \node at (0.5,3.5) {A}; \node at (1.5,3.5) {A}; \node at (2.5,3.5) {D};
    \node at (0.5,2.5) {7}; \node at (1.5,2.5) {4}; \node at (2.5,2.5) {3};
    \node at (0.5,1.5) {8}; \node at (1.5,1.5) {5}; \node at (2.5,1.5) {2};
    \node at (0.5,0.5) {9}; \node at (1.5,0.5) {6}; \node at (2.5,0.5) {1};
    \end{tikzpicture}
    \end{tabular}
    \caption{Examples of 3 × 3 ascending-descending grid puzzles with valid solutions.}
\end{figure}

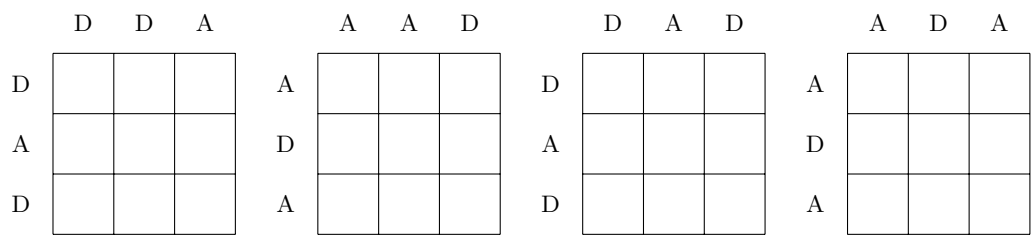
\begin{figure}[h]
\centering
\begin{tabular}{cccc}
\begin{tikzpicture}[scale=0.8]
\draw (0,0) grid (3,3);
\node[left] at (-0.23,2.5) {D};
\node[left] at (-0.23,1.5) {A};
\node[left] at (-0.23,0.5) {D};
\node[above] at (0.5,3.23) {D};
\node[above] at (1.5,3.23) {D};
\node[above] at (2.5,3.23) {A};
\end{tikzpicture}
&
\begin{tikzpicture}[scale=0.8]
\draw (0,0) grid (3,3);
\node[left] at (-0.23,2.5) {A};
\node[left] at (-0.23,1.5) {D};
\node[left] at (-0.23,0.5) {A};
\node[above] at (0.5,3.23) {A};
\node[above] at (1.5,3.23) {A};
\node[above] at (2.5,3.23) {D};
\end{tikzpicture}
&
\begin{tikzpicture}[scale=0.8]
\draw (0,0) grid (3,3);
\node[left] at (-0.23,2.5) {D};
\node[left] at (-0.23,1.5) {A};
\node[left] at (-0.23,0.5) {D};
\node[above] at (0.5,3.23) {D};
\node[above] at (1.5,3.23) {A};
\node[above] at (2.5,3.23) {D};
\end{tikzpicture}
&
\begin{tikzpicture}[scale=0.8]
\draw (0,0) grid (3,3);
\node[left] at (-0.23,2.5) {A};
\node[left] at (-0.23,1.5) {D};
\node[left] at (-0.23,0.5) {A};
\node[above] at (0.5,3.23) {A};
\node[above] at (1.5,3.23) {D};
\node[above] at (2.5,3.23) {A};
\end{tikzpicture}
\end{tabular}
\caption{Four examples of 3 × 3 ascending-descending grid puzzles with no solution.}
\label{fig:some-unsolvable-cases}
\end{figure}

The examples in~\Cref{fig:some-unsolvable-cases} show configurations that have no valid solutions. To understand why these puzzles are unsolvable, we first observe a key structural property that creates an impossible situation: if there exist rows $i,j$ with $i < j$ labeled D and A (or A and D) respectively, and columns $p,q$ with $p < q$ labeled A and D (or D and A) respectively, then we have a circular constraint, illustrated in Figure 3.

\begin{figure}[h]
\centering
\begin{tikzpicture}[scale=1.2]
\draw[dotted] (-0.8,-0.8) rectangle (3.5,2.5);

\draw (0,0) grid (2,2);

\node[left] at (-0.3,1.5) {D};
\node[left] at (-0.3,0.5) {A};
\node[above] at (0.5,2.1) {A};
\node[above] at (1.5,2.1) {D};

\node[right] at (2.3,1.5) {row $i$};
\node[right] at (2.3,0.5) {row $j$};
\node[below] at (0.5,-0.3) {col $p$};
\node[below] at (1.5,-0.3) {col $q$};

\node at (0.5,1.5) {$a$};
\node at (1.5,1.5) {$b$};
\node at (0.5,0.5) {$c$};
\node at (1.5,0.5) {$d$};

\draw[->, thick, purple] (0.7,1.5) -- (1.3,1.5);

\draw[->, thick, purple] (1.5,1.3) -- (1.5,0.7);

\draw[->, thick, purple] (1.3,0.5) -- (0.7,0.5);

\draw[->, thick, purple] (0.5,0.7) -- (0.5,1.3);

\begin{scope}[xshift=5cm]
\draw[dotted] (-0.8,-0.8) rectangle (3.5,2.5);

\draw (0,0) grid (2,2);

\node[left] at (-0.3,1.5) {A};
\node[left] at (-0.3,0.5) {D};
\node[above] at (0.5,2.1) {D};
\node[above] at (1.5,2.1) {A};

\node[right] at (2.3,1.5) {row $i$};
\node[right] at (2.3,0.5) {row $j$};
\node[below] at (0.5,-0.3) {col $p$};
\node[below] at (1.5,-0.3) {col $q$};

\node at (0.5,1.5) {$a$};
\node at (1.5,1.5) {$b$};
\node at (0.5,0.5) {$c$};
\node at (1.5,0.5) {$d$};

\draw[->, thick, purple] (0.5,1.3) -- (0.5,0.7);

\draw[->, thick, purple] (0.7,0.5) -- (1.3,0.5);

\draw[->, thick, purple] (1.5,0.7) -- (1.5,1.3);

\draw[->, thick, purple] (1.3,1.5) -- (0.7,1.5);

\end{scope}

\end{tikzpicture}
\caption{Two illustrations of circular constraints in unsolvable configurations. Left: rows labeled D,A with columns A,D, creating the cycle $a < b < d < c < a$. Right: rows labeled A,D with columns D,A, creating the cycle $a < c < d < b < a$. Both cycles are impossible to satisfy with distinct values.}\label{fig:unsolvable}
\end{figure}
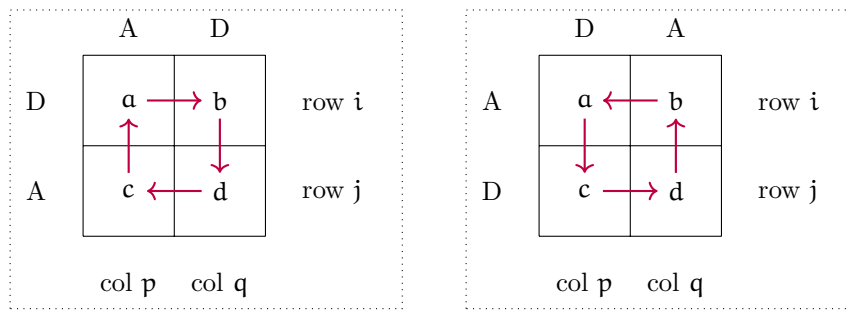

Consider the four cells at positions $(i,p)$, $(i,q)$, $(j,p)$, and $(j,q)$, containing values $a$, $b$, $c$, and $d$ respectively. Following the arrows in the diagram, the constraints force a cyclic chain of inequalities: $a < b < d < c < a$ or $a < c < d < b < a$, depending on the labeling, which is clearly impossible.

We now formally define the notion of a \emph{sorting match puzzle}.


\begin{mydefinition}{Sorting Match Puzzles}{def:general-permutation-puzzle}
A sorting match puzzle of order $n$ is a pair of words $(r,c)$, each of length $n$, over the alphabet $\{A,D\}$. A sorting match puzzle $(r,c)$ is \emph{solvable} if there exists a permutation of the integers $1,2,\ldots,n^2$ arranged in an $n \times n$ grid such that:
    \begin{itemize}
        \item For each $i \in \{1,\ldots,n\}$, if $r_i = A$, the entries in the $i^{th}$ row are in ascending order; if $r_i = D$, the entries are in descending order.
        \item For each $j \in \{1,\ldots,n\}$, if $c_j = A$, the $j^{th}$ column is in ascending order; if $c_j = D$, the entries are in descending order.
    \end{itemize}
\end{mydefinition}

We also call a sorting match puzzle row (respectively, column) uniform if all its row (resp., column) labels are identical.

\begin{mydefinition}{Uniform Sorting Match puzzles}{def:uniform-permutation-puzzle}
A sorting match puzzle $(r,c)$ is row-uniform if $r = A^n$ or $r = D^n$; a sorting match puzzle $(r,c)$ is col-uniform if $c = A^n$ or $c = D^n$; and a sorting match puzzle $(r,c)$ is uniform if it is both row-uniform and col-uniform.
\end{mydefinition}

A first natural question we want to address is the following: when is a sorting match puzzle solvable? Our first result is that the scenarios described in~\Cref{fig:unsolvable} are the \emph{only} obstructions to solvability. In particular, solvable puzzles are either row or column uniform, or are of the form $r = A^k D^{n-k}$ and $c = A^\ell D^{n-\ell}$ for some $k, \ell \in \{1,\ldots,n-1\}$ or $r = D^k A^{n-k}$ and $c = D^\ell A^{n-\ell}$ for some $k, \ell \in \{1,\ldots,n-1\}$. This characterization also tells us that there are $4 \cdot (2^n) + 2 \cdot (n-1)^2$ solvable puzzles. 

Next, we want to explore both solvable and unsolvable puzzles further. For solvable puzzles, we would like to count the number of solutions. For unsolvable puzzles, we would like to know the minimum number of label flips required to make the puzzle solvable. We discover that the answer to the first question involves the hook length formula, while for the second question notice that by guessing the ``shape'' of the nearest solvable puzzle, we can find the answer in $O(n^2)$ time by brute force. We improve this to a linear time algorithm using a simple dynamic programming approach. We note that our ideas here extend to $d$-dimensional grids too.

Notice that ascending and descending orders can be thought of as two special permutations over $[n]$. This suggests the following natural generalization of the puzzle above: where the labels can be arbitrary permutations over $[n]$, and a solution is a permutation of the integers $1,2,\ldots,n^2$ that satisfies the ordering constraints imposed by the labels --- see~\Cref{fig:generalized-example} for an illustration. This motivates our next definition. 

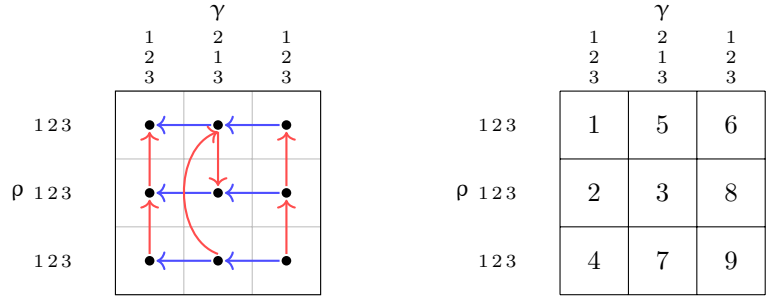
\begin{figure}[h]
\centering
\begin{tikzpicture}[scale=0.9]
\begin{scope}[shift={(0,0)}]
\draw[gray!50] (0,0) grid (3,3);
\draw[black] (0,0) rectangle (3,3);

\node[left] at (-0.5,2.5) {\scriptsize $1\,2\,3$};
\node[left] at (-0.5,1.5) {\scriptsize $1\,2\,3$};
\node[left] at (-0.5,0.5) {\scriptsize $1\,2\,3$};
\node[left] at (-1.2,1.5) {\small $\rho$};

\node at (0.5,3.8) {\scriptsize $1$};
\node at (0.5,3.5) {\scriptsize $2$};
\node at (0.5,3.2) {\scriptsize $3$};

\node at (1.5,3.8) {\scriptsize $2$};
\node at (1.5,3.5) {\scriptsize $1$};
\node at (1.5,3.2) {\scriptsize $3$};

\node at (2.5,3.8) {\scriptsize $1$};
\node at (2.5,3.5) {\scriptsize $2$};
\node at (2.5,3.2) {\scriptsize $3$};

\node at (1.5,4.15) {\small $\gamma$};

\foreach \j in {0,1,2} {
    \foreach \i in {0,1,2} {
        \fill[black] (\i+0.5,\j+0.5) circle (2pt);
    }
}

\draw[->, thick, blue!70] (2.4,2.5) -- (1.6,2.5);
\draw[->, thick, blue!70] (1.4,2.5) -- (0.6,2.5);
\draw[->, thick, blue!70] (2.4,1.5) -- (1.6,1.5);
\draw[->, thick, blue!70] (1.4,1.5) -- (0.6,1.5);
\draw[->, thick, blue!70] (2.4,0.5) -- (1.6,0.5);
\draw[->, thick, blue!70] (1.4,0.5) -- (0.6,0.5);

\draw[->, thick, red!70] (0.5,0.6) -- (0.5,1.4);
\draw[->, thick, red!70] (0.5,1.6) -- (0.5,2.4);

\draw[->, thick, red!70] (1.5,0.6) to[out=160,in=200] (1.5,2.4);
\draw[->, thick, red!70] (1.5,2.4) -- (1.5,1.6);

\draw[->, thick, red!70] (2.5,0.6) -- (2.5,1.4);
\draw[->, thick, red!70] (2.5,1.6) -- (2.5,2.4);

\end{scope}

\begin{scope}[shift={(6.5,0)}]
\draw (0,0) grid (3,3);

\node[left] at (-0.5,2.5) {\scriptsize $1\,2\,3$};
\node[left] at (-0.5,1.5) {\scriptsize $1\,2\,3$};
\node[left] at (-0.5,0.5) {\scriptsize $1\,2\,3$};
\node[left] at (-1.2,1.5) {\small $\rho$};

\node at (0.5,3.8) {\scriptsize $1$};
\node at (0.5,3.5) {\scriptsize $2$};
\node at (0.5,3.2) {\scriptsize $3$};

\node at (1.5,3.8) {\scriptsize $2$};
\node at (1.5,3.5) {\scriptsize $1$};
\node at (1.5,3.2) {\scriptsize $3$};

\node at (2.5,3.8) {\scriptsize $1$};
\node at (2.5,3.5) {\scriptsize $2$};
\node at (2.5,3.2) {\scriptsize $3$};

\node at (1.5,4.15) {\small $\gamma$};

\node at (0.5,2.5) {1};
\node at (1.5,2.5) {5};
\node at (2.5,2.5) {6};
\node at (0.5,1.5) {2};
\node at (1.5,1.5) {3};
\node at (2.5,1.5) {8};
\node at (0.5,0.5) {4};
\node at (1.5,0.5) {7};
\node at (2.5,0.5) {9};

\end{scope}

\end{tikzpicture}
\caption{A generalized sorting match puzzle with row permutations $\rho_1 = \rho_2 = \rho_3 = (1,2,3)$ (all ascending) and column permutations $\gamma_1 = \gamma_3 = (1,2,3)$ (ascending) but $\gamma_2 = (2,1,3)$. Left: The constraint graph where an arrow $P \to Q$ indicates $P > Q$ (blue for rows, red for columns); arrows implied by transitivity are omitted for clarity. For column 2, the permutation $(2,1,3)$ requires: middle entry $<$ top entry $<$ bottom entry. Right: A valid solution where column 2 has $3 < 5 < 7$.}
\label{fig:generalized-example}
\end{figure}

\begin{mydefinition}{Permutation Match Puzzles}{def:generalized-permutation-puzzle}
A \emph{permutation match puzzle} of order $n$ is a pair $(\rho, \gamma)$, where $\rho = (\rho_1, \ldots, \rho_n)$ and $\gamma = (\gamma_1, \ldots, \gamma_n)$ are sequences of permutations, with each $\rho_i, \gamma_j \in S_n$. A generalized sorting match puzzle $(\rho, \gamma)$ is \emph{solvable} if there exists a permutation of the integers $1, 2, \ldots, n^2$ arranged in an $n \times n$ grid such that:
    \begin{itemize}
        \item For each $i \in \{1, \ldots, n\}$, if the entries in the $i^{th}$ row are $a_1, a_2, \ldots, a_n$ (from left to right), then $a_{\rho_i^{-1}(1)} < a_{\rho_i^{-1}(2)} < \cdots < a_{\rho_i^{-1}(n)}$.
        \item For each $j \in \{1, \ldots, n\}$, if the entries in the $j^{th}$ column are $b_1, b_2, \ldots, b_n$ (from top to bottom), then $b_{\gamma_j^{-1}(1)} < b_{\gamma_j^{-1}(2)} < \cdots < b_{\gamma_j^{-1}(n)}$.
    \end{itemize}
\end{mydefinition}

Sorting match puzzles are a special case of permutation match puzzles where each $\rho_i$ and $\gamma_j$ is either the identity permutation (corresponding to A) or the reversal permutation (corresponding to D). 

By interpreting the permutation constraints as directed edges on a grid, we observe that determining if a permutation match puzzle is solvable or not reduces to cycle detection. Permutation match puzzles inspire the following question: given a directed graph $G$ on the vertex set $\{(i,j) \mid 1 \leq i,j \leq n\}$, where the subgraph induced by the vertices $R_i := \{(i,j) \mid 1 \leq j \leq n\}$ and $C_j := \{(i,j) \mid 1 \leq i \leq n\}$ is acyclic for each $i,j \in [n]$, what is the minimum number of edges we need to flip to make the $G$ acyclic?  This is essentially a feedback arc set problem on a restricted class of directed graphs. Our final result is that this problem is NP-complete.

\textbf{Related Work.} Permutation match puzzles belong to a rich tradition of combinatorial puzzles with algebraic and algorithmic underpinnings. The most classical example is the 15-puzzle, whose solvability was characterized by Johnson and Story~\cite{johnson1879notes} using permutation parity: a configuration is reachable if and only if it corresponds to an even permutation. While checking solvability is polynomial-time, Ratner and Warmuth~\cite{ratner1990n} showed that finding the shortest solution for generalized $n \times n$ sliding puzzles is NP-hard.

Our counting results connect to the theory of posets and Young tableaux. The problem of counting linear extensions of a poset is \#P-complete in general~\cite{brightwell1991counting}, making our closed-form formula for sorting match puzzles particularly notable. The formula we derive is reminiscent of the celebrated hook length formula of Frame, Robinson, and Thrall~\cite{frame1954hook}, which counts standard Young tableaux of a given shape. Greene, Nijenhuis, and Wilf~\cite{greene1979probabilistic} gave an elegant probabilistic proof of this formula using a ``hook walk'' argument, which bears conceptual similarity to our recursive decomposition approach.

The structural constraints in permutation match puzzles also relate to geometric grid classes of permutations studied by Albert, Atkinson, Bouvel, Ru\v{s}kuc, and Vatter~\cite{albert2013geometric}, where permutations are constrained by monotonicity conditions on grid cells. Grid-based constraint satisfaction more broadly connects to Latin squares and Sudoku puzzles~\cite{colbourn2010latin, yato2003complexity}, though our puzzles differ in using global ordering constraints rather than local distinctness conditions.

Our NP-completeness result for the minimum repair problem relates to feedback arc set problems. While minimum feedback arc set is NP-hard even for tournaments~\cite{charbit2007minimum}, and approximation algorithms exist for general instances~\cite{ailon2008aggregating}, our reduction shows hardness persists even when the underlying graph has the restricted structure arising from grid-based permutation constraints.

\section{Sorting Match Puzzles}

We first establish a characterization of solvable sorting match puzzles.

\begin{restatable}{theorem}{solvablepuzzlecharacterization}
    \label{thm:solvable-puzzle-characterization}
A sorting match puzzle $(r,c)$ is solvable if and only if it is either row-uniform or col-uniform, or satisfies one of the following conditions:
\begin{enumerate}
    \item There exists $k \in \{1,\ldots,n-1\}$ and $\ell \in \{1,\ldots,n-1\}$ such that:
    \begin{itemize}
        \item The first $k$ rows are labeled A and the remaining rows are labeled D, and
        \item The first $\ell$ columns are labeled A and the remaining columns are labeled D. 
    \end{itemize}
    \item There exists $k \in \{1,\ldots,n-1\}$ and $\ell \in \{1,\ldots,n-1\}$ such that:
    \begin{itemize}
        \item The first $k$ rows are labeled D and the remaining rows are labeled A, and
        \item The first $\ell$ columns are labeled D and the remaining columns are labeled A.
    \end{itemize}
\end{enumerate}
\end{restatable}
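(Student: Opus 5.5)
We prove the two directions separately. Necessity uses the $2\times 2$ obstruction of Figure~\ref{fig:unsolvable}. Sufficiency uses explicit fillings, or equivalently acyclicity of the constraint graph.

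\textbf{Necessity.} Suppose $(r,c)$ is solvable but neither row-uniform nor col-uniform. Then both words contain both letters. Call a pair of rows $i<j$ an \emph{AD-pair} if $r_i=A, r_j=D$, and a \emph{DA-pair} if $r_i=D, r_j=A$; define column pairs $p<q$ similarly. The argument in Figure~\ref{fig:unsolvable} shows that an AD-pair of rows together with a DA-pair of columns forces a $4$-cycle of strict inequalities on the cells $(i,p),(i,q),(j,p),(j,q)$, and so does a DA-pair of rows with an AD-pair of columns. Before using this, we check the two remaining mixed cases (rows D,A with columns D,A, and rows A,D with columns A,D). There the four inequalities do not close into a cycle, so these are the only obstructions.

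Since $r$ is non-uniform, it has an AD-pair or a DA-pair. If $r$ is not of the form $A^kD^{n-k}$ or $D^kA^{n-k}$, it has at least two switches, hence both an AD-pair and a DA-pair. Any non-uniform $c$ then has some pair that combines with one of them to give an obstruction. So $r$ has exactly one switch, and by symmetry so does $c$. If $r=A^kD^{n-k}$, then $r$ has an AD-pair, so $c$ has no DA-pair, which forces $c=A^\ell D^{n-\ell}$. The case $r=D^kA^{n-k}$ similarly forces $c=D^\ell A^{n-\ell}$.

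\textbf{Sufficiency.} We exhibit an explicit solution in each case.
\begin{itemize}
\item \emph{Row-uniform case.} Say $r=A^n$; the case $D^n$ is symmetric by replacing each entry $x$ with $n^2+1-x$ and flipping all labels. For each column $j$ give it the block $\{(j-1)n+1,\ldots,jn\}$, arranged increasingly or decreasingly according to $c_j$. Every entry of column $j$ is smaller than every entry of column $j+1$, so every row is ascending. The col-uniform case is the transpose.
\item \emph{Case 1} ($r=A^kD^{n-k}$, $c=A^\ell D^{n-\ell}$). Set the value at $(i,j)$ to be increasing in a potential $f(i,j)=g(i)+h(j)$. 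Here $g$ increases on $[1,k]$ and decreases on $[k+1,n]$, with both parts unimodal about row $k$, and similarly for $h$. A clean choice is $g(i)=-|i-k-\tfrac12|$ and $h(j)=-|j-\ell-\tfrac12|$, so the peak sits between row $k$ and row $k+1$. This makes each row/column direction monotone as required, since $f$ restricted to a row is $h$ plus a constant.
\end{itemize}
The one subtlety is that ties in $f$ must be broken compatibly with all constraints. To avoid this, take $g,h$ with generic real values: perturb $g$ by $\epsilon\cdot i$-type terms and $h$ by $\epsilon^2$-type terms chosen so all $n^2$ sums are distinct. Then rank the cells by $f$. This is a linear extension in which each row is strictly ordered according to $h$ and each column according to $g$.

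Case 2 follows from Case 1 by complementation $x\mapsto n^2+1-x$, which swaps A and D everywhere.

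The main obstacle is getting the case analysis in the necessity direction airtight: making sure that "at least two switches" really yields both pair types, and that a non-uniform $c$ always supplies the conflicting pair. The construction is routine once it is phrased as a separable potential.
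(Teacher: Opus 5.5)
The gap is in your Case~1 construction, which is the only non-trivial part of sufficiency. Your necessity argument is correct. It is also more explicit than the paper, whose proof only gives constructions and leaves necessity to the $2\times 2$ obstruction of Figure~\ref{fig:unsolvable}. Your uniform-case constructions and the reduction of Case~2 to Case~1 via $x\mapsto n^2+1-x$ are fine too.

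For $f(i,j)=g(i)+h(j)$ you have $f(i,j)-f(i,j')=h(j)-h(j')$, which does not depend on $i$. So \emph{every row receives the same relative order}, and likewise every column. In Case~1, rows $1,\dots,k$ must be ascending and rows $k+1,\dots,n$ descending, so no separable potential can work. A tie-breaking perturbation of the form $\epsilon i+\epsilon^2 j$ keeps $f$ separable and cannot help.

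Your particular choice is worse still. $h(j)=-|j-\ell-\tfrac12|$ is unimodal in $j$, so each row would rise and then fall, matching neither A nor D. You have conflated two things: how the \emph{label} of row $i$ depends on $i$, and how $f$ behaves \emph{along} row $i$, i.e.\ as a function of $j$. Two concrete failures:
\begin{itemize}
\item For $n=3$, $k=\ell=1$, row~$1$ (labelled A) gets $f$-values $-1,-1,-2$, so its last cell is the smallest.
\item For $n=2$, $k=\ell=1$, all four values of $f$ tie. The perturbation $\epsilon i+\epsilon^2 j$ then produces the grid with rows $(1,2)$ and $(3,4)$, which violates both row~$2$ and column~$2$ (both labelled D).
\end{itemize}

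The idea can be rescued by using a product rather than a sum: take $f(i,j)=(k+\tfrac12-i)(j-\ell-\tfrac12)$.
\begin{itemize}
\item Along row $i$, $f$ is linear in $j$ with slope $k+\tfrac12-i$, positive exactly when $i\le k$.
\item Along column $j$, $f$ is linear in $i$ with slope $\ell+\tfrac12-j$, positive exactly when $j\le\ell$.
\end{itemize}
So each row and column is strictly monotone in the required direction. Two cells can tie only if they share neither a row nor a column, and such cells are unconstrained. Ranking the cells by $f$, with ties broken arbitrarily, therefore gives a solution.

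Note that $f<0$ on $P^{\nwarrow}\cup P^{\searrow}$ and $f>0$ on $P^{\nearrow}\cup P^{\swarrow}$. This is exactly the block structure of the paper's construction. It puts the smallest $k\ell+(n-k)(n-\ell)$ numbers in $P^{\nwarrow}\cup P^{\searrow}$ and fills each quadrant monotonically in the directions dictated by its row and column labels.
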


\begin{proof}

If $(r,c)$ is col-uniform, say with $c = A^n$, then fill the grid from top-left to bottom-right, row by row, with the entries in the $i^{th}$ row being the numbers $\{(i-1)n + 1, \ldots, (i-1)n + n\}$ in ascending order if $r_i = A$ and in descending order if $r_i = D$. If $c = D^n$, then fill the grid from bottom-right to top-left, row by row, with the entries in the $i^{th}$ row from the bottom being the numbers $\{(i-1)n + 1, \ldots, (i-1)n + n\}$ in ascending order if $r_i = A$ and in descending order if $r_i = D$. An analogous reasoning applies if $(r,c)$ is row-uniform.

Now consider the case when there exists $k \in \{1,\ldots,n-1\}$ and $\ell \in \{1,\ldots,n-1\}$ such that the first $k$ rows are labeled A and the remaining rows are labeled D, and the first $\ell$ columns are labeled A and the remaining columns are labeled D. Let:

\begin{itemize}
\item $P^{\nwarrow}$ denote the cells included in the intersection of the first $k$ rows and the first $\ell$ columns 
\item $P^{\nearrow}$ denote the cells included in the intersection of the first $k$ rows and the remaining columns
\item $P^{\swarrow}$ denote the cells included in the intersection of the remaining rows and the first $\ell$ columns
\item $P^{\searrow}$ denote the cells included in the intersection of the remaining rows and the remaining columns
\end{itemize}

\begin{figure}[ht]
    \centering
    \begin{tikzpicture}
    \begin{scope}[shift={(0,0)},scale=0.7]
        \draw (0,0) grid (3,3);
        \node at (-0.5,2.5) {A};
        \node at (-0.5,1.5) {A};
        \node at (-0.5,0.5) {A};
        \node at (0.5,3.5) {D};
        \node at (1.5,3.5) {A};
        \node at (2.5,3.5) {D};
        \node at (0.5,2.5) {3};
        \node at (1.5,2.5) {4};
        \node at (2.5,2.5) {9};
        \node at (0.5,1.5) {2};
        \node at (1.5,1.5) {5};
        \node at (2.5,1.5) {8};
        \node at (0.5,0.5) {1};
        \node at (1.5,0.5) {6};
        \node at (2.5,0.5) {7};
    \end{scope}
    
    \begin{scope}[shift={(3.75,0)},scale=0.7]
        \draw (0,0) grid (3,3);
        \node at (-0.5,2.5) {D};
        \node at (-0.5,1.5) {D};
        \node at (-0.5,0.5) {D};
        \node at (0.5,3.5) {D};
        \node at (1.5,3.5) {A};
        \node at (2.5,3.5) {D};
        \node at (0.5,2.5) {9};
        \node at (1.5,2.5) {4};
        \node at (2.5,2.5) {3};
        \node at (0.5,1.5) {8};
        \node at (1.5,1.5) {5};
        \node at (2.5,1.5) {2};
        \node at (0.5,0.5) {7};
        \node at (1.5,0.5) {6};
        \node at (2.5,0.5) {1};
    \end{scope}
    
    \begin{scope}[shift={(7.5,0)},scale=0.7]
        \draw (0,0) grid (3,3);
        \node at (-0.5,2.5) {D};
        \node at (-0.5,1.5) {A};
        \node at (-0.5,0.5) {D};
        \node at (0.5,3.5) {A};
        \node at (1.5,3.5) {A};
        \node at (2.5,3.5) {A};
        \node at (0.5,2.5) {3};
        \node at (1.5,2.5) {2};
        \node at (2.5,2.5) {1};
        \node at (0.5,1.5) {4};
        \node at (1.5,1.5) {5};
        \node at (2.5,1.5) {6};
        \node at (0.5,0.5) {9};
        \node at (1.5,0.5) {8};
        \node at (2.5,0.5) {7};
    \end{scope}
    
    \begin{scope}[shift={(11.25,0)},scale=0.7]
        \draw (0,0) grid (3,3);
        \node at (-0.5,2.5) {D};
        \node at (-0.5,1.5) {A};
        \node at (-0.5,0.5) {D};
        \node at (0.5,3.5) {D};
        \node at (1.5,3.5) {D};
        \node at (2.5,3.5) {D};
        \node at (0.5,2.5) {9};
        \node at (1.5,2.5) {8};
        \node at (2.5,2.5) {7};
        \node at (0.5,1.5) {4};
        \node at (1.5,1.5) {5};
        \node at (2.5,1.5) {6};
        \node at (0.5,0.5) {3};
        \node at (1.5,0.5) {2};
        \node at (2.5,0.5) {1};
    \end{scope}
    
    \end{tikzpicture}
    \caption{Examples of row-uniform and column-uniform puzzles with non-uniform labels DAD in the other direction. All examples show valid solutions constructed according to the method described in the proof.}
    \label{fig:uniform-examples}
    \end{figure}

    \begin{figure}[ht]
        \centering
        \begin{tikzpicture}[scale=0.9]
        \definecolor{lightpurple}{RGB}{240,230,255}
        \definecolor{lightblue}{RGB}{230,240,255}
        \definecolor{lightpink}{RGB}{255,230,240}
        \definecolor{lightyellow}{RGB}{255,255,230}
        
        \fill[lightpurple] (0,4) rectangle (3,9);
        \fill[lightblue] (3,4) rectangle (9,9);
        \fill[lightpink] (0,0) rectangle (3,4);
        \fill[lightyellow] (3,0) rectangle (9,4);
        
        \draw (0,0) grid (9,9);
        
        \foreach \i in {0,...,8} {
            \ifnum \i < 5
                \node[left] at (-0.5,8.5-\i) {A};
            \else
                \node[left] at (-0.5,8.5-\i) {D};
            \fi
        }
        
        \foreach \i in {0,...,8} {
            \ifnum \i < 3
                \node[above] at (0.5+\i,9.3) {A};
            \else
                \node[above] at (0.5+\i,9.3) {D};
            \fi
        }
        
        \foreach \i in {0,...,4} {
            \foreach \j in {0,...,2} {
                \pgfmathtruncatemacro{\num}{1 + \i*3 + \j}
                \node at (0.5+\j,8.5-\i) {\num};
            }
        }
        
        \foreach \i in {0,...,3} {
            \foreach \j in {0,...,5} {
                \pgfmathtruncatemacro{\num}{39 - (\i*6 + \j)}
                \node at (3.5+\j,3.5-\i) {\num};
            }
        }
        
        \foreach \i in {0,...,4} {
            \foreach \j in {0,...,5} {
                \pgfmathtruncatemacro{\num}{40 + (4-\i)*6 + \j}
                \node at (3.5+\j,8.5-\i) {\num};
            }
        }
        
        \foreach \i in {0,...,3} {
            \foreach \j in {0,...,2} {
                \pgfmathtruncatemacro{\num}{70 + \i*3 + (2-\j)}
                \node at (0.5+\j,3.5-\i) {\num};
            }
        }
        
        \end{tikzpicture}
        \caption{A valid solution for a 9×9 grid with $k=5$ and $\ell=3$. The grid is partitioned into four subgrids: $P^{\nwarrow}$ (1-15), $P^{\nearrow}$ (40-69), $P^{\swarrow}$ (70-81), and $P^{\searrow}$ (39-16).}
        \label{fig:solvablecasetwo}
        \end{figure}

One solution to this puzzle is the following: populate $P^{\nwarrow}$ with the first $k\ell$ numbers in ascending order, filling this sub-grid row by row, top-left to bottom-right, and populate $P^{\searrow}$ with the next $(n-k)(n-\ell)$ numbers in descending order, filling this sub-grid row by row, top-left to bottom-right. Then we fill the cells in $P^{\nearrow}$ with the next $k(n-\ell)$ numbers in descending order, filling this sub-grid row by row, top-left to bottom-right, however keeping the numbers in each row ascending; and finally, the cells in $P^{\swarrow}$ with the next $(n-k)\ell$ numbers in ascending order, filling this sub-grid row by row, top-left to bottom-right, however keeping the numbers in each row descending.

We refer the reader to~\Cref{fig:solvablecasetwo} for an example of this algorithm with $n = 9, k = 5, \ell = 3$. The other case in the theorem has a symmetric argument. 
\end{proof}

\begin{corollary}
There are $$4 \cdot (2^n-1) + 2 \cdot (n-1)^2$$ sorting match puzzles of order $n$.
\end{corollary}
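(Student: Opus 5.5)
The count follows by applying inclusion--exclusion to the classes listed in \Cref{thm:solvable-puzzle-characterization}. The statement should be read as counting the \emph{solvable} sorting match puzzles of order $n$, since the total number of puzzles is $4^n$. I will use the theorem as an exact description of the solvable set, both the ``if'' and ``only if'' directions, and count the pairs of words $(r,c)\in\{A,D\}^n\times\{A,D\}^n$ in the union of four families: row-uniform, col-uniform, condition~1 and condition~2.

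\textbf{Step 1: the uniform part.} A puzzle is row-uniform when $r\in\{A^n,D^n\}$ and $c$ is arbitrary, which gives $2\cdot 2^n$ puzzles. By symmetry there are also $2\cdot 2^n$ col-uniform puzzles. A puzzle is both exactly when $r\in\{A^n,D^n\}$ and $c\in\{A^n,D^n\}$, so there are $4$ such puzzles. By inclusion--exclusion the union has $2\cdot 2^n+2\cdot 2^n-4=4(2^n-1)$ elements.

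\textbf{Step 2: the two non-uniform families.} Condition~1 is determined by the pair $(k,\ell)\in\{1,\ldots,n-1\}^2$, since $r=A^kD^{n-k}$ and $c=A^\ell D^{n-\ell}$. Distinct pairs give distinct words, because $k$ is the position of the unique switch in $r$ and $\ell$ is that in $c$. Hence condition~1 contributes exactly $(n-1)^2$ puzzles, and condition~2 likewise contributes $(n-1)^2$.

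\textbf{Step 3: disjointness.} Since $1\le k\le n-1$, the word $r$ contains both letters, and so does $c$. A puzzle satisfying condition~1 or~2 is therefore neither row-uniform nor col-uniform. Conditions~1 and~2 are also disjoint from each other, because $r_1=A$ in the first and $r_1=D$ in the second. Adding the three disjoint counts gives $4(2^n-1)+2(n-1)^2$.

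The only point requiring care is the overlap of row-uniform and col-uniform puzzles in Step~1. Failing to subtract the $4$ doubly uniform puzzles would give the count $4\cdot 2^n+2(n-1)^2$ that appears informally in the introduction; Step~1 shows the correct figure is $4(2^n-1)$. Everything else is immediate from \Cref{thm:solvable-puzzle-characterization}.
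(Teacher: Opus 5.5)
Your proposal is correct and follows the paper's proof step for step: inclusion--exclusion on the row-uniform and col-uniform families (subtracting the four doubly uniform puzzles) gives $4(2^n-1)$, and each of the two non-uniform families adds $(n-1)^2$. You also make explicit three points the paper leaves implicit: the $(k,\ell)$-parametrization is injective, the three pieces are pairwise disjoint, and the statement counts \emph{solvable} puzzles.
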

\begin{proof}
By \Cref{thm:solvable-puzzle-characterization}, solvable puzzles are either: (i) row-uniform puzzles (where $r = A^n$ or $r = D^n$), giving $2 \cdot 2^n$ puzzles (two choices for $r$, and $2^n$ choices for $c$), or (ii) column-uniform puzzles (where $c = A^n$ or $c = D^n$), giving $2 \cdot 2^n$ puzzles (two choices for $c$, and $2^n$ choices for $r$), or (iii) non-uniform (of two types, which we will see later).

This sums up to $4\cdot 2^n$. However, we need to make a small correction for over-counting: we have to subtract four to account for the fact that the four puzzles $(r = A^n, c = A^n)$, $(r = A^n, c = D^n)$, $(r = D^n, c = A^n)$, $(r = D^n, c = D^n)$ are double-counted as they are both row-uniform and column-uniform. This makes it $4\cdot 2^n - 4$ uniform puzzles total.

Finally, let us count the non-uniform puzzles. By \Cref{thm:solvable-puzzle-characterization}, we have two types of non-uniform puzzles: (i) $r = A^k D^{n-k}$ and $c = A^\ell D^{n-\ell}$ for some $k, \ell \in \{1,\ldots,n-1\}$, or (ii) $r = D^k A^{n-k}$ and $c = D^\ell A^{n-\ell}$ for some $k, \ell \in \{1,\ldots,n-1\}$. For each type, there are $(n-1)^2$ solvable puzzles (one puzzle for each choice of $k$ and $\ell$). This makes it $2 \cdot (n-1)^2$ non-uniform puzzles total.
\end{proof}

The first few terms of the series are $4,14,36,78,156$. This sequence is not present on the On-line Encyclopedia of Integer Sequences (\href{https://oeis.org/}{OEIS}), and we plan to put it up soon.

\textbf{Graph-Theoretic Formulation.} We observe that this puzzle can be reformulated as a problem of topologically sorting a directed graph. This view is redundant in the light of the characterization above, but will be useful when we consider the general permutation match puzzles later. 

Given an instance of a sorting match puzzle, we can create a vertex for each cell in the grid, and add the following directed edges based on the A/D labels: for each row labeled A (respectively, D): add edges from left to right (respectively, right to left); for each column labeled A (respectively, D): add edges from bottom to top (respectively, top to bottom).

\begin{theorem}
An instance of the ascending-descending grid puzzle has a solution if and only if its corresponding directed graph is acyclic.
\end{theorem}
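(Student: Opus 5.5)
The plan is to prove both directions by identifying solutions with topological orders of the constraint graph $G$. We orient edges so that every edge $u \to v$ encodes a strict inequality between the values at $u$ and $v$, say ``value at $v$ exceeds value at $u$''. For a row labeled A the edges go left to right, so values increase to the right. For a column labeled A the edges go bottom to top. This is the paper's convention, and it is consistent provided we read the column order in the way the figures implicitly do; any uniform convention works, since all the argument needs is that each edge records exactly one required strict inequality between two cells.

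\emph{Solution implies acyclic.} Suppose a filling satisfies every row and column constraint. Then along every edge $u \to v$ of $G$ the value strictly increases. A directed cycle $v_0 \to v_1 \to \cdots \to v_m = v_0$ would then give $x_{v_0} < x_{v_1} < \cdots < x_{v_m} = x_{v_0}$, which is impossible. Hence $G$ is acyclic. The circular constraints of \Cref{fig:unsolvable} are exactly the instances of this argument for 4-cycles.

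\emph{Acyclic implies solution.} If $G$ is acyclic, it admits a topological ordering $v_1, \ldots, v_{n^2}$ of its $n^2$ vertices, in which every edge goes from an earlier vertex to a later one. Assign the value $t$ to the cell $v_t$. This is a bijection onto $\{1,\ldots,n^2\}$, and every edge goes from a smaller value to a larger one. We must still check that each row and column is in the required order. The edges were added only between consecutive cells of each line, oriented according to that line's label. Each such edge forces the corresponding consecutive inequality, and chaining these inequalities along the line gives full monotonicity in the labeled direction. The same check applies verbatim if the construction is read as adding all pairwise edges within a line.

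The only real obstacle is bookkeeping. We need to fix one orientation convention and confirm that ``edge $u\to v$'' means the same inequality for rows and columns, so that the graph encodes precisely the constraints of the definition: no more, so that the forward direction holds, and no fewer, so that the backward direction holds. Once this is pinned down, both directions are standard facts about DAGs and linear extensions. No appeal to \Cref{thm:solvable-puzzle-characterization} is needed, though combined with it the result yields that the acyclic graphs are exactly those arising from the label patterns listed there.
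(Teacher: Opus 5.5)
Your proof is correct and follows the paper's argument: a valid filling is a topological order of the constraint graph, so there is no directed cycle, and conversely, numbering any topological order $1,\ldots,n^2$ gives a valid filling. Your chaining of consecutive-cell edges and your requirement that each edge encode exactly one required inequality make explicit what the paper's two-line proof leaves implicit. One small correction to your aside: the paper's figures read A-columns as increasing top-to-bottom, so its stated rule (A-rows left to right, A-columns bottom to top) agrees with them under no single edge semantics; it fits only the bottom-to-top reading used in the boustrophedon proof's text, which is why fixing one uniform convention, as you insist, is necessary rather than cosmetic.
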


\begin{proof}
($\Rightarrow$) If there exists a solution, then the numbers in the grid provide a valid topological ordering of the vertices, which implies the graph must be acyclic.

($\Leftarrow$) If the graph is acyclic, any topological ordering of the vertices provides a valid solution when we replace the vertices with numbers 1 through $n^2$ in order.
\end{proof}

Having established solvability, we now turn to counting solutions for sorting match puzzles that are not uniform. 

\subsection{Counting Solutions}

Consider a $(n \times n)$ sorting match puzzle $(r,c)$, where there exist integers $k \in \{1,\ldots,n-1\}$ and $\ell \in \{1,\ldots,n-1\}$ such that:
    \begin{itemize}
        \item The first $k$ rows are labeled A and the remaining rows are labeled D, and
        \item The first $\ell$ columns are labeled A and the remaining columns are labeled D. 
    \end{itemize}

Recall our notation from before: $P^{\nwarrow}$ for the top-left $k \times \ell$ subgrid, $P^{\nearrow}$ for the top-right $k \times (n-\ell)$ subgrid, $P^{\swarrow}$ for the bottom-left $(n-k) \times \ell$ subgrid, and $P^{\searrow}$ for the bottom-right $(n-k) \times (n-\ell)$ subgrid. Our first claim is that in any valid solution, the numbers from $1$ to $k\ell + (n-k)(n-\ell)$ are used only in $P^{\nwarrow}$ and $P^{\searrow}$.

\begin{lemma}
In any valid solution to this puzzle, the numbers from $1$ to $k\ell + (n-k)(n-\ell)$ must appear only in $P^{\nwarrow}$ and $P^{\searrow}$.
\end{lemma}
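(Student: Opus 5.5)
The plan is to show something stronger: in any valid solution, every entry of $P^{\nearrow}$ and every entry of $P^{\swarrow}$ is larger than every entry of $P^{\nwarrow}\cup P^{\searrow}$. Given this, an entry $x$ in $P^{\nearrow}\cup P^{\swarrow}$ exceeds at least $|P^{\nwarrow}|+|P^{\searrow}| = k\ell+(n-k)(n-\ell)$ distinct values of the grid. So $x > k\ell+(n-k)(n-\ell)$. Every number from $1$ to $k\ell+(n-k)(n-\ell)$ must then sit in $P^{\nwarrow}\cup P^{\searrow}$, and since that region has exactly that many cells, it holds precisely those numbers.

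The comparisons come from chaining row and column constraints through the "corner" cells next to the boundary between the blocks. I use the conventions already fixed in the solution construction of \Cref{thm:solvable-puzzle-characterization}. A row labelled A increases left to right and a row labelled D decreases left to right. A column labelled A increases top to bottom and a column labelled D decreases top to bottom. Write $x_{i,j}$ for the entry in row $i$, column $j$, and fix a cell $(i,j)\in P^{\nearrow}$, so $i\le k<j$ and $j>\ell$.

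\emph{Against $P^{\nwarrow}$.} Let $(i',j')$ have $i'\le k$ and $j'\le \ell$. Column $j$ is D, so $x_{i,j}\ge x_{k,j}$. Row $k$ is A, so $x_{k,j}>x_{k,\ell}$. Column $\ell$ is A and $i'\le k$, so $x_{k,\ell}\ge x_{i',\ell}$. Row $i'$ is A, so $x_{i',\ell}\ge x_{i',j'}$. The middle step is strict, hence $x_{i,j}>x_{i',j'}$.

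\emph{Against $P^{\searrow}$.} Let $(i',j')$ have $i'>k$ and $j'>\ell$. Row $i$ is A, so $x_{i,j}\ge x_{i,\ell+1}$. Column $\ell+1$ is D and $i\le k<k+1$, so $x_{i,\ell+1}> x_{k+1,\ell+1}$. The same column gives $x_{k+1,\ell+1}\ge x_{i',\ell+1}$. Row $i'$ is D, so $x_{i',\ell+1}\ge x_{i',j'}$. Hence $x_{i,j}>x_{i',j'}$.

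The case of a cell in $P^{\swarrow}$ is the mirror image, obtained by swapping the roles of rows and columns. Its chains route through the corner cell $(k+1,\ell)$ for the comparison with $P^{\searrow}$, and through row $k$ and column $\ell$ as above for the comparison with $P^{\nwarrow}$.

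There is no real obstacle here. The only point needing care is choosing, for each pair of blocks, a chain whose every step matches the A/D direction of the line it travels along. The natural choice is to route through the corner cells $(k,\ell)$, $(k,\ell+1)$, $(k+1,\ell)$ and $(k+1,\ell+1)$ at the junction of the four blocks. Each chain should also contain at least one step between distinct cells, so that the inequality is strict even when some of the cells coincide.
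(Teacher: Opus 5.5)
Your proof is correct and follows the paper's strategy: show that every entry of $P^{\nwarrow}\cup P^{\searrow}$ is smaller than every entry of $P^{\nearrow}\cup P^{\swarrow}$ by chaining row and column inequalities, then conclude by counting. The only difference is in how the chains are built: the paper uses a one-bend L-shaped path through a corner that depends on where the two cells sit, which needs a small case split, whereas you route every chain through the fixed junction cells in rows $k,k+1$ and columns $\ell,\ell+1$. Your routing avoids that case split and also pins down the column convention (A columns increasing top to bottom) under which the lemma actually holds.
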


\begin{proof}
We first argue that any number in $P^{\nwarrow} \cup P^{\searrow}$ must be smaller than any number in $P^{\swarrow} \cup P^{\nearrow}$. Then the claim follows, since the entries of the grid are a permutation of $1$ to $n^2$.

Consider a number $x$ in $P^{\nwarrow} \cup P^{\searrow}$, located in the $i^{th}$ row and $j^{th}$ column; and a number $y$ in $P^{\swarrow} \cup P^{\nearrow}$, located in the $p^{th}$ row and $q^{th}$ column. There are four cases, and we discuss one of them: the others are analogous.

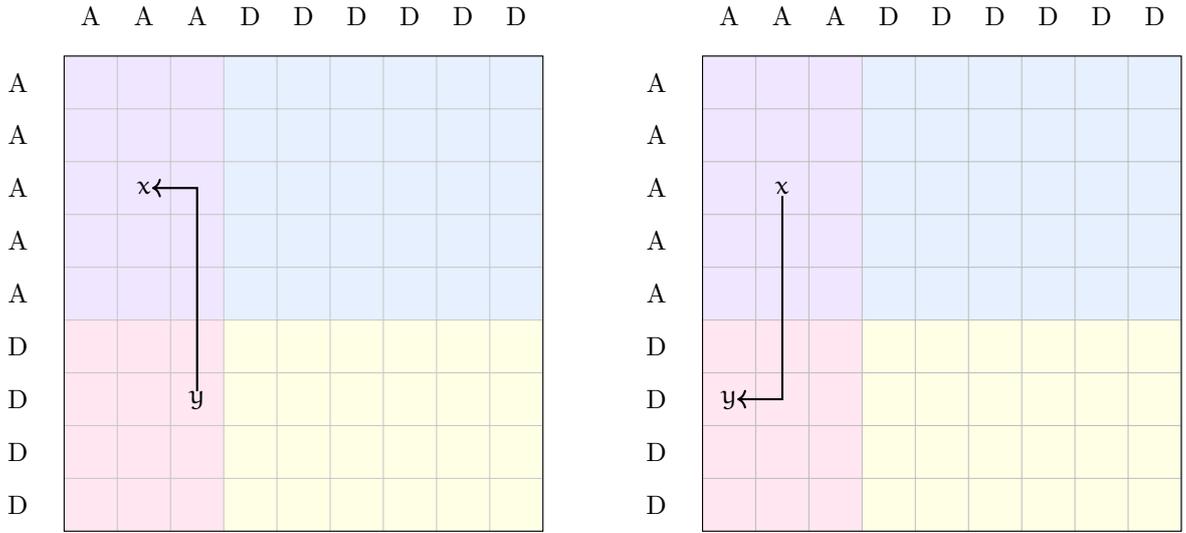
\begin{figure}[h]
    \centering
    \begin{tikzpicture}[scale=0.7]
    \begin{scope}
    \definecolor{lightpurple}{RGB}{240,230,255}
    \definecolor{lightblue}{RGB}{230,240,255}
    \definecolor{lightpink}{RGB}{255,230,240}
    \definecolor{lightyellow}{RGB}{255,255,230}
    \definecolor{lightgray}{RGB}{200,200,200}
    
    \fill[lightpurple] (0,4) rectangle (3,9);
    \fill[lightblue] (3,4) rectangle (9,9);
    \fill[lightpink] (0,0) rectangle (3,4);
    \fill[lightyellow] (3,0) rectangle (9,4);
    
    \draw[lightgray] (0,0) grid (9,9);
    \draw[black] (0,0) rectangle (9,9); 
    
    \foreach \i in {0,...,8} {
        \ifnum \i < 5
            \node[left] at (-0.5,8.5-\i) {A};
        \else
            \node[left] at (-0.5,8.5-\i) {D};
        \fi
    }
    
    \foreach \i in {0,...,8} {
        \ifnum \i < 3
            \node[above] at (0.5+\i,9.4) {A};
        \else
            \node[above] at (0.5+\i,9.4) {D};
        \fi
    }
    
    \node[inner sep=2pt] at (1.5,6.5) {$x$};
    \node[inner sep=2pt] at (2.5,2.5) {$y$};
    \draw[->, black, thick, shorten >=3pt, shorten <=3pt] (2.5,2.5) -- (2.5,6.5) -- (1.5,6.5);
    \end{scope}
    
    \begin{scope}[xshift=12cm]
    \definecolor{lightpurple}{RGB}{240,230,255}
    \definecolor{lightblue}{RGB}{230,240,255}
    \definecolor{lightpink}{RGB}{255,230,240}
    \definecolor{lightyellow}{RGB}{255,255,230}
    
    \fill[lightpurple] (0,4) rectangle (3,9);
    \fill[lightblue] (3,4) rectangle (9,9);
    \fill[lightpink] (0,0) rectangle (3,4);
    \fill[lightyellow] (3,0) rectangle (9,4);
    
    \draw[lightgray] (0,0) grid (9,9);
    \draw[black] (0,0) rectangle (9,9); 
    
    \foreach \i in {0,...,8} {
        \ifnum \i < 5
            \node[left] at (-0.5,8.5-\i) {A};
        \else
            \node[left] at (-0.5,8.5-\i) {D};
        \fi
    }
    
    \foreach \i in {0,...,8} {
        \ifnum \i < 3
            \node[above] at (0.5+\i,9.4) {A};
        \else
            \node[above] at (0.5+\i,9.4) {D};
        \fi
    }
    
    \node[inner sep=2pt] at (1.5,6.5) {$x$};
    \node[inner sep=2pt] at (0.5,2.5) {$y$};
    \draw[->, black, thick, shorten >=3pt, shorten <=3pt] (1.5,6.5) -- (1.5,2.5) -- (0.5,2.5);
    \end{scope}
    
    \end{tikzpicture}
    \caption{Two examples showing paths from $y$ to $x$ in a 9×9 grid with $k=5$ and $\ell=3$. Left: L-shaped path in an ascending column and row. Right: L-shaped path in a descending column and row.}
    \label{fig:counting}
    \end{figure}
    
Suppose $x \in P^{\nwarrow}$ and $y \in P^{\swarrow}$: if $p = q$, then we are in an ascending column and $y > x$. If $p < q$, then $x < z < y$, where $z$ is the number in the $i^{th}$ row and $q^{th}$ column. If $p > q$, then $x < y < z$, where $z$ is the number in the $p^{th}$ row and $j^{th}$ column (see~\Cref{fig:counting} for an illustration).
\end{proof}

An analogous result holds if the first $k$ rows are labeled D and the remaining rows are labeled A, and the first $\ell$ columns are labeled D and the remaining columns are labeled A.

\begin{theorem}
\label{thm:counting-solutions}
The number of solutions to the sorting match puzzle with parameters $k$ and $\ell$ (i.e., first $k$ rows and first $\ell$ columns labeled A, rest labeled D) is:
\[
\binom{L}{k\ell} \cdot \binom{n^2 - L}{(n-k)\ell} \cdot H(k,\ell) \cdot H(n-k, n-\ell) \cdot H(k, n-\ell) \cdot H(n-k, \ell),
\]
where $$H(a,b) = \frac{\prod_{i=0}^{a-1} i!}{\prod_{i=0}^{a-1} (b+i)!} \cdot (ab)!$$ is the number of standard Young tableaux of rectangular shape $a \times b$, and $$L = k\ell + (n-k)(n-\ell).$$
\end{theorem}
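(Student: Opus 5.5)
The plan is to set up a bijection between solutions and tuples made of a split of the values plus four independent standard Young tableaux. The preceding lemma shows that in every valid solution the values $\{1,\ldots,L\}$ fill $P^{\nwarrow}\cup P^{\searrow}$ and the values $\{L+1,\ldots,n^2\}$ fill $P^{\nearrow}\cup P^{\swarrow}$. So a solution is determined by the following data:
\begin{enumerate}
\item which $k\ell$ of the $L$ low values go to $P^{\nwarrow}$ (the other $(n-k)(n-\ell)$ go to $P^{\searrow}$);
\item which $(n-k)\ell$ of the $n^2-L$ high values go to $P^{\swarrow}$ (the rest go to $P^{\nearrow}$);
\item for each quadrant, an arrangement of its assigned value set that respects the row and column constraints inside that quadrant.
\end{enumerate}
I will show that every such choice gives a valid solution, and that the choices in each step can be made independently. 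This yields the product $\binom{L}{k\ell}\binom{n^2-L}{(n-k)\ell}$ times the four per-quadrant counts.

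The key step, and the main obstacle, is showing that the global constraint set splits in this way. Every row or column constraint between two cells is one of two kinds.
\begin{itemize}
\item Both cells lie in the same quadrant. Such a constraint is internal to that quadrant.
\item The cells lie in different quadrants. A row with index $i\le k$ meets only $P^{\nwarrow}$ and $P^{\nearrow}$. A row with $i>k$ meets only $P^{\swarrow}$ and $P^{\searrow}$, and similarly for columns. So a crossing constraint always joins a ``low'' quadrant ($P^{\nwarrow}$ or $P^{\searrow}$) to a ``high'' one ($P^{\nearrow}$ or $P^{\swarrow}$).
\end{itemize}
The case analysis in the lemma's proof, or directly reading off the labels, shows that every crossing constraint demands that the low cell be smaller than the high cell. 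For example, row $i\le k$ is ascending, so its $P^{\nwarrow}$ entries are smaller than its $P^{\nearrow}$ entries. As another example, column $q>\ell$ is descending, so its $P^{\nearrow}$ entries (top) exceed its $P^{\searrow}$ entries (bottom). Hence once low values sit in low quadrants and high values in high quadrants, every crossing constraint holds automatically. In particular, there are no constraints linking $P^{\nwarrow}$ with $P^{\searrow}$, or $P^{\nearrow}$ with $P^{\swarrow}$. The arrangements in the four quadrants are therefore fully independent, and any value subset of the right size can be used in each.

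It remains to count the arrangements within a single $a\times b$ quadrant using a fixed set of $ab$ values. Within a quadrant, every row is monotone in one fixed direction and every column in one fixed direction. After reflecting horizontally and/or vertically, the constraints become ``increasing along rows left to right and down columns''. The problem is thus the same as filling an $a\times b$ rectangle with $1,\ldots,ab$ (after relabeling the values order-preservingly) as a standard Young tableau. By the hook length formula of Frame, Robinson and Thrall, the cell in row $i$ and column $j$ has hook length $(a-i)+(b-j)+1$. The hook product over row $i$ is $(b+a-i)!/(a-i)!$, so the total count is $(ab)!\prod_{i=0}^{a-1} i!/(b+i)! = H(a,b)$. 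Applying this to the four quadrants, of dimensions $k\times\ell$, $(n-k)\times(n-\ell)$, $k\times(n-\ell)$ and $(n-k)\times\ell$, and multiplying by the two binomial coefficients gives the stated formula.
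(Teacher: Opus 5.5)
Your proposal is correct and takes essentially the same route as the paper. You use the preceding lemma to put the $L$ smallest values in $P^{\nwarrow}\cup P^{\searrow}$, choose the two value splits with the binomial coefficients, and count each rectangular quadrant with the hook length formula. You also supply details the paper leaves implicit: every cross-quadrant row/column constraint has the form low $<$ high, so all choices are independent and valid; and the hook-product computation that yields $H(a,b)$.
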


\begin{proof}
By the preceding lemma, numbers $1$ to $k\ell + (n-k)(n-\ell)$ fill $P^{\nwarrow} \cup P^{\searrow}$, while the remaining numbers fill $P^{\nearrow} \cup P^{\swarrow}$. Within each subgrid, the ordering constraints (ascending/descending in rows and columns) make filling equivalent to constructing a standard Young tableau of rectangular shape. The binomial coefficient counts ways to distribute the smallest $k\ell + (n-k)(n-\ell)$ numbers between $P^{\nwarrow}$ and $P^{\searrow}$. Once this choice is made, the hook length formula gives the count for each rectangular subgrid independently.
\end{proof}

\subsection{Unique Solutions}

Note that~\Cref{thm:solvable-puzzle-characterization} characterizes solvability, and gives us conditions for when a permutation grid puzzle has \emph{no} solution. We would now like to characterize when a permutation grid puzzle has \emph{exactly one} solution. Notice that non-uniform grids have multiple solutions as suggested by our counting argument from~\Cref{thm:counting-solutions}: also see~\Cref{fig:multiple-solutions} for examples on puzzles of order two.

\begin{figure}[h]
\centering
\begin{tikzpicture}[scale=0.9]
\begin{scope}
\draw[thick] (0,0) grid (2,2);
\draw[thick] (0,0) rectangle (2,2);
\node at (0.5,1.5) {\large 1};
\node at (1.5,1.5) {\large 3};
\node at (0.5,0.5) {\large 4};
\node at (1.5,0.5) {\large 2};
\node[left] at (-0.2,1.5) {A};
\node[left] at (-0.2,0.5) {D};
\node[above] at (0.5,2.2) {A};
\node[above] at (1.5,2.2) {D};
\node[below] at (1,-0.3) {Solution 1};
\end{scope}

\begin{scope}[xshift=4cm]
\draw[thick] (0,0) grid (2,2);
\draw[thick] (0,0) rectangle (2,2);
\node at (0.5,1.5) {\large 1};
\node at (1.5,1.5) {\large 4};
\node at (0.5,0.5) {\large 3};
\node at (1.5,0.5) {\large 2};
\node[left] at (-0.2,1.5) {A};
\node[left] at (-0.2,0.5) {D};
\node[above] at (0.5,2.2) {A};
\node[above] at (1.5,2.2) {D};
\node[below] at (1,-0.3) {Solution 2};
\end{scope}

\begin{scope}[xshift=9cm]
\draw[thick] (0,0) grid (2,2);
\draw[thick] (0,0) rectangle (2,2);
\node at (0.5,1.5) {\large 4};
\node at (1.5,1.5) {\large 1};
\node at (0.5,0.5) {\large 2};
\node at (1.5,0.5) {\large 3};
\node[left] at (-0.2,1.5) {D};
\node[left] at (-0.2,0.5) {A};
\node[above] at (0.5,2.2) {D};
\node[above] at (1.5,2.2) {A};
\node[below] at (1,-0.3) {Solution 1};
\end{scope}

\begin{scope}[xshift=13cm]
\draw[thick] (0,0) grid (2,2);
\draw[thick] (0,0) rectangle (2,2);
\node at (0.5,1.5) {\large 4};
\node at (1.5,1.5) {\large 2};
\node at (0.5,0.5) {\large 1};
\node at (1.5,0.5) {\large 3};
\node[left] at (-0.2,1.5) {D};
\node[left] at (-0.2,0.5) {A};
\node[above] at (0.5,2.2) {D};
\node[above] at (1.5,2.2) {A};
\node[below] at (1,-0.3) {Solution 2};
\end{scope}
\end{tikzpicture}
\caption{Non-uniform puzzles with multiple solutions. Left pair: the AD/AD puzzle has two valid solutions. Right pair: the DA/DA puzzle also has two valid solutions. In both cases, the non-uniform structure in rows and columns creates incomparable pairs in the poset.}
\label{fig:multiple-solutions}
\end{figure}
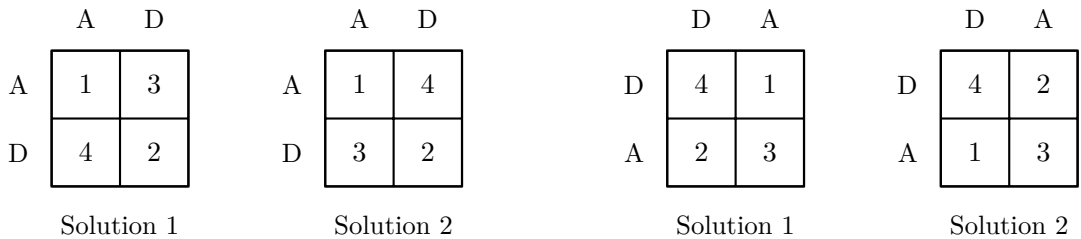

On the other hand, if we fix a row-uniform or column-uniform puzzle, we can characterize when it has a unique solution: this happens precisely when the non-uniform side is alternating, and is filled out by a so-called ``boustrophedon'' pattern (see~\Cref{fig:boustrophedon}).

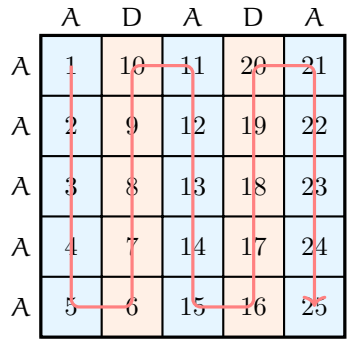
\begin{figure}[h]
\centering
\begin{tikzpicture}[scale=0.8]
    \definecolor{colA}{RGB}{230,245,255}
    \definecolor{colD}{RGB}{255,240,230}

    \fill[colA] (0,0) rectangle (1,5);
    \fill[colD] (1,0) rectangle (2,5);
    \fill[colA] (2,0) rectangle (3,5);
    \fill[colD] (3,0) rectangle (4,5);
    \fill[colA] (4,0) rectangle (5,5);

    \draw[thick] (0,0) grid (5,5);
    \draw[very thick] (0,0) rectangle (5,5);

    \node[above] at (0.5,5) {$A$};
    \node[above] at (1.5,5) {$D$};
    \node[above] at (2.5,5) {$A$};
    \node[above] at (3.5,5) {$D$};
    \node[above] at (4.5,5) {$A$};

    \node[left] at (0,0.5) {$A$};
    \node[left] at (0,1.5) {$A$};
    \node[left] at (0,2.5) {$A$};
    \node[left] at (0,3.5) {$A$};
    \node[left] at (0,4.5) {$A$};

    \node at (0.5,4.5) {1};
    \node at (1.5,4.5) {10};
    \node at (2.5,4.5) {11};
    \node at (3.5,4.5) {20};
    \node at (4.5,4.5) {21};
    \node at (0.5,3.5) {2};
    \node at (1.5,3.5) {9};
    \node at (2.5,3.5) {12};
    \node at (3.5,3.5) {19};
    \node at (4.5,3.5) {22};
    \node at (0.5,2.5) {3};
    \node at (1.5,2.5) {8};
    \node at (2.5,2.5) {13};
    \node at (3.5,2.5) {18};
    \node at (4.5,2.5) {23};
    \node at (0.5,1.5) {4};
    \node at (1.5,1.5) {7};
    \node at (2.5,1.5) {14};
    \node at (3.5,1.5) {17};
    \node at (4.5,1.5) {24};
    \node at (0.5,0.5) {5};
    \node at (1.5,0.5) {6};
    \node at (2.5,0.5) {15};
    \node at (3.5,0.5) {16};
    \node at (4.5,0.5) {25};

    \draw[->, thick, red!50, line width=1.2pt, rounded corners=3pt]
        (0.5,4.5) -- (0.5,3.5) -- (0.5,2.5) -- (0.5,1.5) -- (0.5,0.5) --
        (1.5,0.5) -- (1.5,1.5) -- (1.5,2.5) -- (1.5,3.5) -- (1.5,4.5) --
        (2.5,4.5) -- (2.5,3.5) -- (2.5,2.5) -- (2.5,1.5) -- (2.5,0.5) --
        (3.5,0.5) -- (3.5,1.5) -- (3.5,2.5) -- (3.5,3.5) -- (3.5,4.5) --
        (4.5,4.5) -- (4.5,3.5) -- (4.5,2.5) -- (4.5,1.5) -- (4.5,0.5);
\end{tikzpicture}
\caption{The boustrophedon (snake) pattern for a $5 \times 5$ puzzle with $r = A^5$ and $c = ADADA$. The alternating column constraints force a unique traversal order, shown by the arrows.}
\label{fig:boustrophedon}
\end{figure}

\begin{theorem}
A sorting match puzzle of order $n \geq 2$ has a unique solution if and only if it satisfies one of the following:
\begin{enumerate}
    \item Row-uniform with alternating columns: $r \in \{A^n, D^n\}$ and $c$ alternates between $A$ and $D$.
    \item Column-uniform with alternating rows: $c \in \{A^n, D^n\}$ and $r$ alternates between $A$ and $D$.
\end{enumerate}
\end{theorem}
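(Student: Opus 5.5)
A solution is exactly a linear extension (topological ordering) of the constraint DAG from the graph-theoretic formulation. A finite poset has a unique linear extension if and only if it is a chain, that is, every pair of cells is comparable. So the plan is to decide, for each solvable puzzle in \Cref{thm:solvable-puzzle-characterization}, when the constraint poset on the $n^2$ cells is a total order. Unsolvable puzzles have zero solutions and can be ignored. An equivalent test: the DAG has a Hamiltonian path, and consecutive vertices of the topological order are joined by edges.

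\textbf{Non-uniform case.} The non-uniform solvable puzzles (types 1 and 2 with $k,\ell\in\{1,\ldots,n-1\}$) never have a unique solution. The counting formula of \Cref{thm:counting-solutions} gives at least $\binom{L}{k\ell}\ge 2$ solutions, since $0<k\ell<L$. Alternatively, one can exhibit two incomparable cells directly, as in \Cref{fig:multiple-solutions}. Indeed, cells of $P^{\nwarrow}$ and $P^{\searrow}$ share no row or column, and every path between them would have to pass through the larger-valued blocks, which the preceding lemma forbids. The type-2 case is symmetric.

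\textbf{Uniform case.} By symmetry (transposing, and reversing all values, which swaps A and D everywhere), assume $r=A^n$, so every row increases left to right. Write $c$ arbitrarily.
\begin{itemize}
\item Suppose $c$ alternates. By induction on columns, every cell of column $j$ precedes every cell of column $j+1$. The path goes down or up column $j$ according to $c_j$, then along the bottom or top row edge to column $j+1$. Because the labels alternate, the last cell of column $j$ and the first cell of column $j+1$ lie in the same row, so the row edge joins them. Concatenating gives a Hamiltonian path, which is the boustrophedon of \Cref{fig:boustrophedon}. Its existence forces the poset to be a chain, so the solution is unique.
\item Suppose instead $c_j=c_{j+1}$ for some $j$, say both A. Take cells $x=(n,j)$, which is maximal in column $j$, and $y=(1,j+1)$, which is minimal in column $j+1$. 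I will show these are incomparable.
\end{itemize}

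The main obstacle is this incomparability argument. In a row-uniform ascending grid, every edge either stays in its column or moves right. So any path from $y$ to $x$ is impossible, since it would have to move left. Any path from $x$ to $y$ must leave column $j$ (which it can only do by going right within the bottom row $n$) and then reach row $1$ in column $j+1$ or later. But once the path leaves column $j$ it never returns to column $\le j$, so it must be in column $j+1$ itself to reach $y$. It enters column $j+1$ at some row and then can only move downward, because the column is ascending from the top. So it cannot reach row $1$ unless it enters there. Entering from column $j$ requires starting at row $n$, which is a contradiction when $n\ge 2$.

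I expect the delicate part to be phrasing this monotonicity cleanly: column indices along a path are nondecreasing, and within one column the row index is monotone in the direction fixed by $c$. The D/D case is analogous, with the roles of top and bottom exchanged. Combining the cases gives exactly conditions 1 and 2 of the theorem.
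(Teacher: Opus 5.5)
Your proof is correct. Its three-way case split (uniform with alternating labels, uniform with two equal adjacent labels, non-uniform) matches the paper's, but the key arguments in the first two cases take a different route.

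For the alternating case, the paper runs an induction on values: $1,\dots,n$ are forced into column~1, then $n+1,\dots,2n$ into column~2, and so on. You instead use the fact that the solution is unique exactly when the constraint poset is a chain. You then exhibit the boustrophedon as a directed Hamiltonian path in the constraint DAG. This gives uniqueness at once, avoids the value-by-value bookkeeping, and treats both starting labels the same way.

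For the non-alternating case, the paper takes a $2\times 2$ block in two equally labelled adjacent columns. It observes that the block's two off-diagonal cells are not related by the four local inequalities. Strictly, that shows only \emph{local} incomparability, since a longer path through the rest of the grid could in principle relate the two cells. Your monotonicity argument supplies the \emph{global} statement, here for the extreme pair $(n,j)$ and $(1,j+1)$. It rests on two facts: when $r=A^n$ the column index is nondecreasing along any path, and inside a column the row index moves in a fixed direction. The same reasoning would also justify the paper's choice of cells.

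In the mixed case you rely on the counting formula as the paper does, but you make the bound explicit via $\binom{L}{k\ell}\ge 2$. Your transposition and value-reversal symmetries make precise the paper's ``analogous arguments'' for the remaining uniform cases.
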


\begin{proof}
We first show that an alternating constraint pattern on an uniform puzzle has a unique solution. Consider the case $r = A^n$ with $c = (AD)^{n/2}$ (or $(AD)^{(n-1)/2}A$ if $n$ is odd). Every row is ascending left-to-right. Column 1 is ascending (bottom-to-top), column 2 is descending, and so on. This creates a snake-like path: starting from cell $(1,1)$, we traverse up column 1, then across row $n$ to column 2, down column 2, across row 1 to column 3, and so forth. We now argue that each step in this path is \emph{forced} by the constraints, making the solution unique.

We show by induction that value $k$ is at cell $(k, 1)$ for each $k \in \{1, \ldots, n\}$. Since all rows are ascending, the value 1 cannot appear in any column other than the first. Since column 1 is ascending, the value 1 must be in row 1. Thus, cell $(1,1)$ contains 1, this is the base case. For the inductive step, suppose values $1, \ldots, k$ occupy cells $(1, 1), \ldots, (k, 1)$ for some $k < n$. We show that value $k+1$ must be at $(k+1, 1)$.

Suppose $k+1$ is at cell $(i, j)$ with $j \geq 2$. Since row $i$ is ascending, $(i, 1) < (i, j) = k+1$, so $(i, 1) \leq k$. By the inductive hypothesis, the cells containing values $1, \ldots, k$ are exactly $(1,1), \ldots, (k,1)$, hence $i \leq k$. Now consider column 2, which is descending. All cells $(i', 2)$ with $i' > i$ satisfy $(i', 2) < (i, 2) \leq k+1$ (the first inequality by column 2 descending, the second by row $i$ ascending). Thus all such cells need values from $\{1, \ldots, k\}$, but these values are already in column 1. Since there are $n - i \geq n - k \geq 1$ such cells, we have a contradiction.

Therefore $k+1$ is in column 1. Since cells $(1,1), \ldots, (k,1)$ are occupied and column 1 is ascending, we have $(k+1, 1) = k+1$.

After column 1 is filled with $1, \ldots, n$, we show $n+1$ must be at $(n, 2)$. If $n+1$ were at $(i, j)$ with $j \geq 3$, then $(i, 2) < (i, j) = n+1$ by row $i$ ascending. But all values less than $n+1$ are in column 1, so $(i, 2)$ cannot be filled. Thus $n+1$ must be in column 2. A similar inductive argument shows that values $n+1, \ldots, 2n$ occupy cells $(n, 2), (n-1, 2), \ldots, (1, 2)$ respectively: the descending constraint on column 2 forces us to fill it from bottom to top. The pattern alternates: odd columns fill upward, even columns fill downward, creating the boustrophedon pattern. The case $r = D^n$ with alternating columns, and the symmetric cases with $c$ uniform and $r$ alternating, follow by analogous arguments.

Now suppose $r = A^n$ (row-uniform) but $c$ contains two adjacent columns with the same label, say columns $j$ and $j+1$ are both descending. Consider the $2 \times 2$ subgrid formed by rows $i, i+1$ and columns $j, j+1$. Let the four cells be:
\[
\begin{array}{|c|c|}
\hline
w & x \\
\hline
y & z \\
\hline
\end{array}
\]
where $w = (i+1,j)$, $x = (i+1,j+1)$, $y = (i,j)$, $z = (i,j+1)$.

The constraints require  $y < z$ and $w < x$ and $y < w$ and $z < x$. These constraints form a partial order where $y$ is minimum and $x$ is maximum, but $w$ and $z$ are \emph{incomparable} --- neither is constrained to be less than the other. This means there exist valid solutions with $w < z$ and other valid solutions with $z < w$, giving at least two distinct solutions. As a concrete example, consider a $4 \times 4$ puzzle with $r = AAAA$ and $c = DDAD$. The following are both valid solutions:
\[
\renewcommand{\arraystretch}{1.3}
\setlength{\arraycolsep}{6pt}
\begin{array}{c|cccc}
& D & D & A & D \\
\hline
A & 4 & 8 & 9 & 16 \\
A & 3 & 7 & 10 & 15 \\
A & 2 & 6 & 11 & 14 \\
A & 1 & 5 & 12 & 13 \\
\end{array}
\qquad\qquad
\begin{array}{c|cccc}
& D & D & A & D \\
\hline
A & 7 & 8 & 9 & 16 \\
A & 5 & 6 & 10 & 15 \\
A & 3 & 4 & 11 & 14 \\
A & 1 & 2 & 12 & 13 \\
\end{array}
\]

For the mixed case (where neither $r$ nor $c$ is uniform), the hook length formula from the previous section shows the count is always greater than 1.
\end{proof}

\subsection{Nearest Solvable Puzzles}

Imagine Tanvi encounters an unsolvable sorting match puzzle. Unwilling to give up, she plans to change some of the row and column labels to make the puzzle solvable. However, she does not want to be conspicuous in doing so. Hence she wonders: what is the minimum number of label flips needed to make the puzzle solvable?

Given an arbitrary sorting match puzzle $(r, c) \in \{A, D\}^n \times \{A, D\}^n$, we may ask: what is the ``nearest'' solvable puzzle? Using the notation $\|a,b\|$ to denote the Hamming distance between two strings, we define the \emph{nearest solvable puzzle problem}:

\begin{framed}
\noindent\textsc{Nearest Solvable Sorting Match Puzzle} \\[0.5em]
\textbf{Input:} A sorting match puzzle $(r, c) \in \{A, D\}^n \times \{A, D\}^n$ \\
\textbf{Output:} A solvable puzzle $(r^*, c^*)$ minimizing $\|r,r^*\| + \|c,c^*\|$
\end{framed}

Recall from \Cref{thm:solvable-puzzle-characterization} that a puzzle is solvable if and only if it avoids certain forbidden $2 \times 2$ patterns. A key observation is that if either $r$ or $c$ is \emph{uniform} (all $A$'s or all $D$'s), then no forbidden pattern can arise, since such patterns require transitions in \emph{both} rows and columns at adjacent positions. More generally, a puzzle is solvable if both $r$ and $c$ have at most one transition point (from $A$ to $D$ or vice versa), provided these transitions are \emph{compatible} --- both must be $A \to D$ transitions or both must be $D \to A$ transitions.

Notice that by guessing the ``shape'' of the nearest solvable puzzle, we can find the answer in $O(n^2)$ time by brute force. But we can do better. For a string $s \in \{A, D\}^n$, define:
\begin{itemize}
    \item $X(k) = |\{i \leq k : s_i = A\}|$, the number of $A$'s in the first $k$ positions
    \item $\textsf{total}_A(s) = X(n)$, the total number of $A$'s in $s$
\end{itemize}

The cost to transform $s$ into the pattern $A^k D^{n-k}$ (first $k$ positions are $A$, remaining are $D$) is:
\[
\textsf{cost}_{A \to D}(k) = \underbrace{(k - X(k))}_{\text{\# D's to flip in prefix}} + \underbrace{(\textsf{total}_A - X(k))}_{\text{\# A's to flip in suffix}} = k + \textsf{total}_A - 2 \cdot X(k)
\]

Similarly, the cost to transform $s$ into $D^k A^{n-k}$ is:
\[
\textsf{cost}_{D \to A}(k) = X(k) + (n - k - \textsf{total}_A + X(k)) = 2 \cdot X(k) + n - k - \textsf{total}_A
\]

We compute $X(k)$ incrementally via the recurrence $X(0) = 0$ and $X(k) = X(k-1) + [s_k = A]$. This yields recurrences for the costs:
\begin{align*}
\textsf{cost}_{A \to D}(0) &= \textsf{total}_A, & \textsf{cost}_{A \to D}(k) &= \textsf{cost}_{A \to D}(k-1) + \begin{cases} -1 & \text{if } s_k = A \\ +1 & \text{if } s_k = D \end{cases} \\
\textsf{cost}_{D \to A}(0) &= n - \textsf{total}_A, & \textsf{cost}_{D \to A}(k) &= \textsf{cost}_{D \to A}(k-1) + \begin{cases} +1 & \text{if } s_k = A \\ -1 & \text{if } s_k = D \end{cases}
\end{align*}
As we scan from $k = 0$ to $n$, the cost changes by $\pm 1$ at each step, so we can track the running minimum in $O(n)$ time for each string.

\begin{theorem}\label{thm:minimum-flips}
The nearest solvable sorting match puzzle can be computed in $O(n)$ time.
\end{theorem}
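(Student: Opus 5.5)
The plan is to enumerate the solvable target shapes given by \Cref{thm:solvable-puzzle-characterization} and observe that the total cost $\|r,r^*\|+\|c,c^*\|$ separates into a row part and a column part, so each family of shapes can be minimized independently in each coordinate. The solvable puzzles form four families:
\begin{itemize}
\item row-uniform, with $r^*\in\{A^n,D^n\}$ and $c^*$ arbitrary;
\item column-uniform, with $c^*\in\{A^n,D^n\}$ and $r^*$ arbitrary;
\item the $A\to D$ family $r^*=A^kD^{n-k}$, $c^*=A^\ell D^{n-\ell}$;
\item the $D\to A$ family $r^*=D^kA^{n-k}$, $c^*=D^\ell A^{n-\ell}$.
\end{itemize}
Note that the last two families, if we allow $k,\ell\in\{0,\ldots,n\}$, also contain some uniform puzzles. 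This does no harm, since those puzzles are solvable anyway.

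\textbf{Row- and column-uniform families.} For the row-uniform family, the best choice is $c^*=c$ (cost $0$ on columns), and $r^*$ is whichever of $A^n,D^n$ is nearer $r$. That cost is $\min(\textsf{total}_A(r),\,n-\textsf{total}_A(r))$, computable in $O(n)$. The column-uniform family is symmetric.

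\textbf{$A\to D$ family.} The cost is $\textsf{cost}^{r}_{A\to D}(k)+\textsf{cost}^{c}_{A\to D}(\ell)$, a sum of a function of $k$ alone and a function of $\ell$ alone, so the minimum over pairs $(k,\ell)$ is $\min_k \textsf{cost}^r_{A\to D}(k)+\min_\ell \textsf{cost}^c_{A\to D}(\ell)$. Each minimum is obtained in one linear scan using the recurrences displayed before the theorem; this is precisely what reduces the $O(n^2)$ brute force over pairs to $O(n)$. The $D\to A$ family is handled identically with $\textsf{cost}_{D\to A}$.

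\textbf{Combining.} The answer is the minimum of the four family values, and we record the argmin indices so that $(r^*,c^*)$ itself is output. Correctness follows because every solvable puzzle lies in one of the families (by \Cref{thm:solvable-puzzle-characterization}), every family member is solvable, and within each family we minimized exactly.

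\textbf{Expected obstacle.} The only delicate point is the range of $k,\ell$. The characterization restricts them to $\{1,\ldots,n-1\}$, while the scan ranges over $0..n$. I will argue that the extended range only adds uniform puzzles, which are solvable and hence legitimate candidates, so the minimum over the extended range equals the true minimum. I will also verify the closed-form cost formulas, e.g.\ $k+\textsf{total}_A-2X(k)$, against the Hamming distance.
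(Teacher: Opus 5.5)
Your proposal is correct and takes the paper's approach: the answer is the minimum of the two uniform-family costs $\min(\textsf{total}_A(s), n-\textsf{total}_A(s))$ for $s\in\{r,c\}$ and the two separable sums $\min_k \textsf{cost}^{r}(k)+\min_\ell \textsf{cost}^{c}(\ell)$ for the $A\to D$ and $D\to A$ families, each computed by one linear scan. Your treatment of the endpoints $k,\ell\in\{0,n\}$ (they only add uniform, hence solvable, candidates) and your argmin tracking to output $(r^*,c^*)$ spell out details the paper leaves implicit; the paper's scans likewise start at $k=0$.
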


\begin{proof}
Let $\textsf{min}_{A \to D}(s)$ and $\textsf{min}_{D \to A}(s)$ denote the minimum costs over all $k$ for the respective patterns. The answer is:
\[
\min\left(
\begin{array}{l}
\min(\textsf{total}_A(r), n - \textsf{total}_A(r)), \\
\min(\textsf{total}_A(c), n - \textsf{total}_A(c)), \\
\textsf{min}_{A \to D}(r) + \textsf{min}_{A \to D}(c), \\
\textsf{min}_{D \to A}(r) + \textsf{min}_{D \to A}(c)
\end{array}
\right).
\]

The first two options make one string uniform (cost = minimum of flipping all to $A$ or all to $D$), leaving the other unchanged. The last two options transform both strings to have at most one compatible transition. Each quantity is computed in a single $O(n)$ pass.
\end{proof}



\section{Permutation Match Puzzles}

In this section, we turn to permutation match puzzles and establish that determining the minimum number of constraint modifications needed to make such a puzzle solvable is NP-complete. Recall that a permutation match puzzle is solvable if and only if its associated constraint graph is acyclic. This motivates the following graph-theoretic problem.

\begin{framed}
\noindent\textsc{Grid Acyclification} \\[0.5em]
\textbf{Input:} A positive integer $n$, a non-negative integer $k$, and a directed graph $G$ on the vertex set $V = \{(i,j) \mid 1 \leq i,j \leq n\}$ such that for each $i \in [n]$, the subgraph induced by the row $R_i := \{(i,j) \mid 1 \leq j \leq n\}$ is an acyclic tournament, and for each $j \in [n]$, the subgraph induced by the column $C_j := \{(i,j) \mid 1 \leq i \leq n\}$ is an acyclic tournament. \\
\textbf{Question:} Can $G$ be made acyclic by deleting at most $k$ vertices?
\end{framed}

Deleting a vertex removes it along with all incident edges. The constraint that each row and column induces an acyclic tournament reflects the structure arising from permutation constraints: each permutation $\pi \in S_n$ defines a total order on $n$ elements, which corresponds to a transitive tournament on the row or column.

The \textsc{Grid Acyclification} problem is closely related to the classical \textsc{Feedback Vertex Set} problem, where one seeks to remove a minimum number of vertices to eliminate all directed cycles. Our variant restricts attention to graphs with the grid structure described above.

\begin{theorem}\label{thm:grid-acyclification-np-complete}
\textsc{Grid Acyclification} is NP-complete.
\end{theorem}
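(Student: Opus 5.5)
We would prove membership in NP and NP-hardness separately. Membership is routine. A certificate is a set $S$ of at most $k$ vertices. We check that $G - S$ is acyclic by computing a topological order, which takes time polynomial in $n^2$ (the theorem stating that a puzzle is solvable if and only if its constraint graph is acyclic justifies this test). For hardness we would reduce from \textsc{Directed Feedback Vertex Set}, which is NP-complete: given a digraph $H$ on vertices $\{1,\dots,m\}$ and a budget $k$, decide whether deleting at most $k$ vertices makes $H$ acyclic. The output will be a grid instance with $n = m$ (or $n = O(m)$ after padding) and the same budget $k$.

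\textbf{Construction.} The diagonal cell $(i,i)$ represents vertex $i$ of $H$. An arc $i \to j$ of $H$ is simulated by the two-step path $(i,i) \to (i,j) \to (j,j)$: the first edge lies in row $i$, the second in column $j$. Concretely:
\begin{itemize}
\item In row $i$, the total order puts the cells $(i,j)$ with $i \not\to j$ in $H$ before $(i,i)$, and the cells with $i \to j$ after $(i,i)$.
\item In column $j$, every off-diagonal cell precedes $(j,j)$, so $(j,j)$ is a sink within its column.
\item All off-diagonal cells are ordered by a single global ranking, and every row and column order on off-diagonal cells is the restriction of that ranking. The ranking puts "non-arc" cells before "arc" cells and breaks ties by a fixed rule.
\end{itemize}
Because of the global ranking, any cycle of $G$ must pass through a diagonal vertex. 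Non-arc cells only have edges into diagonal cells and to later cells, so they cannot carry a diagonal-to-diagonal connection that has no counterpart in $H$.

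\textbf{Correctness.} Forward direction: if $X$ is a feedback vertex set of $H$, we delete $\{(i,i) : i \in X\}$. Every remaining cycle would have to project to a cycle of $H$ avoiding $X$, which does not exist. Backward direction: given a solution $S$ in the grid, we normalize it. Every cycle of $G$ through an off-diagonal cell $(i,j)$ also passes through $(i,i)$ or $(j,j)$, so each deleted off-diagonal cell can be replaced by an incident diagonal cell without increasing $|S|$. After this, $S$ consists of diagonal cells only and projects to a feedback vertex set of $H$ of size at most $k$.

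\textbf{Main obstacle.} The hard step is the projection claim in the backward direction: that a path in $G$ between diagonal cells $(j,j)$ and $(k,k)$ which avoids other diagonal cells exists exactly when $j \to k$ is an arc of $H$. Transitivity inside rows and columns creates spurious routes. For example, $(j,j) \to (j,k') \to (i,k')$ goes down column $k'$, then $(i,k') \to (i,k'')$ moves along row $i$, and finally $(i,k'') \to (k'',k'')$ reaches a diagonal cell. Such a path manufactures a reachability $j \leadsto k''$ that $H$ does not have.

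The first fix we would try is to choose the global ranking so that column moves among arc cells can only decrease a potential that row moves preserve. For instance, rank arc cells first by column index and order each column in reverse. Any path that leaves row $j$ then never re-enters a row with a later arc cell pointing to a new diagonal. If that fails, we would blow up the grid: each vertex of $H$ becomes its own block of rows and columns, and the padding cells are placed at the extremes of every order, where they become sources or sinks and cannot lie on cycles. This isolates each arc gadget so that the spurious paths disappear, and the correctness argument then only needs to verify the projection claim gadget by gadget.
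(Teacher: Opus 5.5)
The gap is exactly the obstacle you flag, and neither of your fixes closes it. All out-arcs of $i$ sit in row $i$ and all in-arcs of $j$ sit in column $j$. Transitivity therefore lets a path make a column move followed by a row move between arc cells, which produces cycles with no counterpart in $H$. Your first fix (arc cells ranked by column index, each column in reverse row order) does not prevent this. Take the acyclic $H$ with arcs $1\to 2$, $1\to 4$, $3\to 2$, $4\to 3$ (topological order $1,4,3,2$). The arc cells are ranked $(3,2)<(1,2)<(4,3)<(1,4)$, and $G$ contains the cycle
\[
(3,3)\to(3,2)\to(1,2)\to(1,4)\to(4,4)\to(4,3)\to(3,3).
\]
This cycle leaves row $3$ and then continues along row $1$ to a later arc cell. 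So $(H,0)$ is a yes-instance while $(G,0)$ is not, and your forward direction already fails. The same cycle passes through $(1,2)$ without meeting $(1,1)$ or $(2,2)$, so your normalization claim is false as well. Moreover, even a statement of the form ``every cycle through $(i,j)$ meets $(i,i)$ or $(j,j)$'' would not justify the replacement. That step needs one fixed diagonal cell meeting \emph{all} cycles through the deleted cell.

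What is missing is a concrete design in which every row and column carries at most one arc cell, while each vertex of $H$ still has a one-cell bottleneck. Your ``blow-up'' gestures at this without specifying it. The paper achieves it in two steps. First, each arc $e=(v_i,v_j)$ gets a private row from the green segment of $v_i$ and a private column from the red segment of $v_j$, with the edge vertex $\varepsilon_e$ at their intersection. Second, each vertex gadget is a monotone staircase (red segment, key cell $\kappa_i$, green segment), so every route from an in-arc to an out-arc of $v_i$ passes through $\kappa_i$. Deleting $\kappa_i$ then simulates deleting $v_i$, and any other deleted cell is pushed to a key. Red and green cells go to their own $\kappa_i$; an edge vertex goes to its tail's key, since every cycle through it meets both endpoint keys. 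If you take this route, make sure the green segment of $v_i$ does not run along the staircase into the red segment of the next gadget, or spurious arcs reappear; the paper's sketch does not spell this out.

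Alternatively, you can keep your $n\times n$ design and reduce from \textsc{Feedback Vertex Set} restricted to digraphs with all in- and out-degrees at most $2$, which is still NP-complete (Garey--Johnson, GT7). Then the row--column incidence graph of the arc cells is a disjoint union of paths and cycles. So you can designate a ``first'' arc cell in each column such that every row contains at most one non-first cell. Rank non-arc cells first, then first cells, then non-first cells. With this ranking, no arc cell has both an earlier column-mate and a later row-mate. Every diagonal-to-diagonal path is then a run of row moves in the tail's row followed by column moves in the head's column, so your projection claim holds. Finally, push a deleted arc cell $(i,j)$ to $(i,i)$ if it is first in its column, and to $(j,j)$ otherwise.
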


We establish this by a reduction from \textsc{Feedback Vertex Set} on directed graphs. Before presenting the formal construction, we illustrate the key ideas through an example.

\textbf{Example.} Consider the directed graph $H$ on five vertices shown in \Cref{fig:fvs-example}. The graph contains several cycles, including $v_2 \to v_4 \to v_3 \to v_5 \to v_2$ and $v_1 \to v_3 \to v_5 \to v_1$. The minimum feedback vertex set of $H$ has size one, since deleting one vertex ($v_3$ or $v_5$) breaks all cycles.

\begin{figure}[t]
\centering
\begin{tikzpicture}[
    directed/.style={
        blue, 
        thick,
        decoration={
            markings,
            mark=at position 0.55 with {\arrow[scale=1.5]{Stealth}}
        },
        postaction={decorate}
    },
    vertex/.style={circle, fill=black, inner sep=2pt}
]

    \node[vertex, label={above right:$v_1$}] (v1) at (0, 4) {};
    \node[vertex, label={below:$v_2$}] (v2) at (0, 1.8) {};
    \node[vertex, label={below left:$v_4$}] (v4) at (-1.5, 0) {};
    \node[vertex, label={right:$v_5$}] (v5) at (2, 0) {};
    \node[vertex, label={left:$v_3$}] (v3) at (-3.5, 0) {};

    
    \draw[directed] (v2) -- (v1);
    \draw[directed] (v2) -- (v4);
    \draw[directed] (v5) -- (v2);
    \draw[directed] (v5) -- (v4);
    \draw[directed] (v4) -- (v3);

    \draw[directed] (v1) to[bend right=25] (v3);
    \draw[directed] (v5) to[bend right=25] (v1);
    \draw[directed] (v4) to[bend left=15] (v1);
    
    \draw[directed] (v3) to[bend right=70] (v5);

\end{tikzpicture} 
\caption{A directed graph $H$ on five vertices $v_1,v_2,v_3.v_4,v_5$. (Note that all cycles of $H$ pass through $v_3$ and $v_5$; hence, deleting any one of these vertices makes the graph acyclic.) The grid graph $G$ corresponding to $H$ is shown in \Cref{fig:grid-construction}.}\label{fig:fvs-example}
\end{figure}
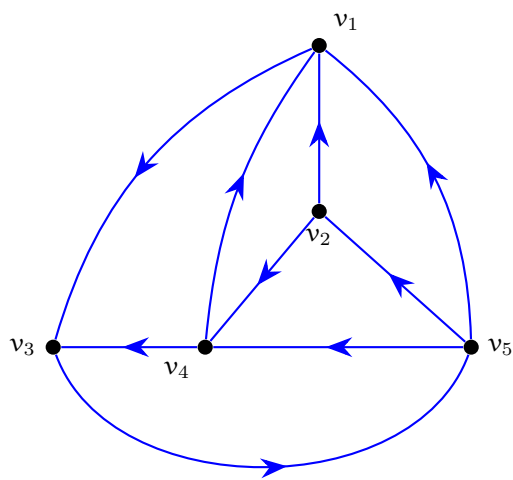

\begin{figure}[h]
    \centering
    \vspace{-0.25cm}
    \includegraphics[scale=0.17]{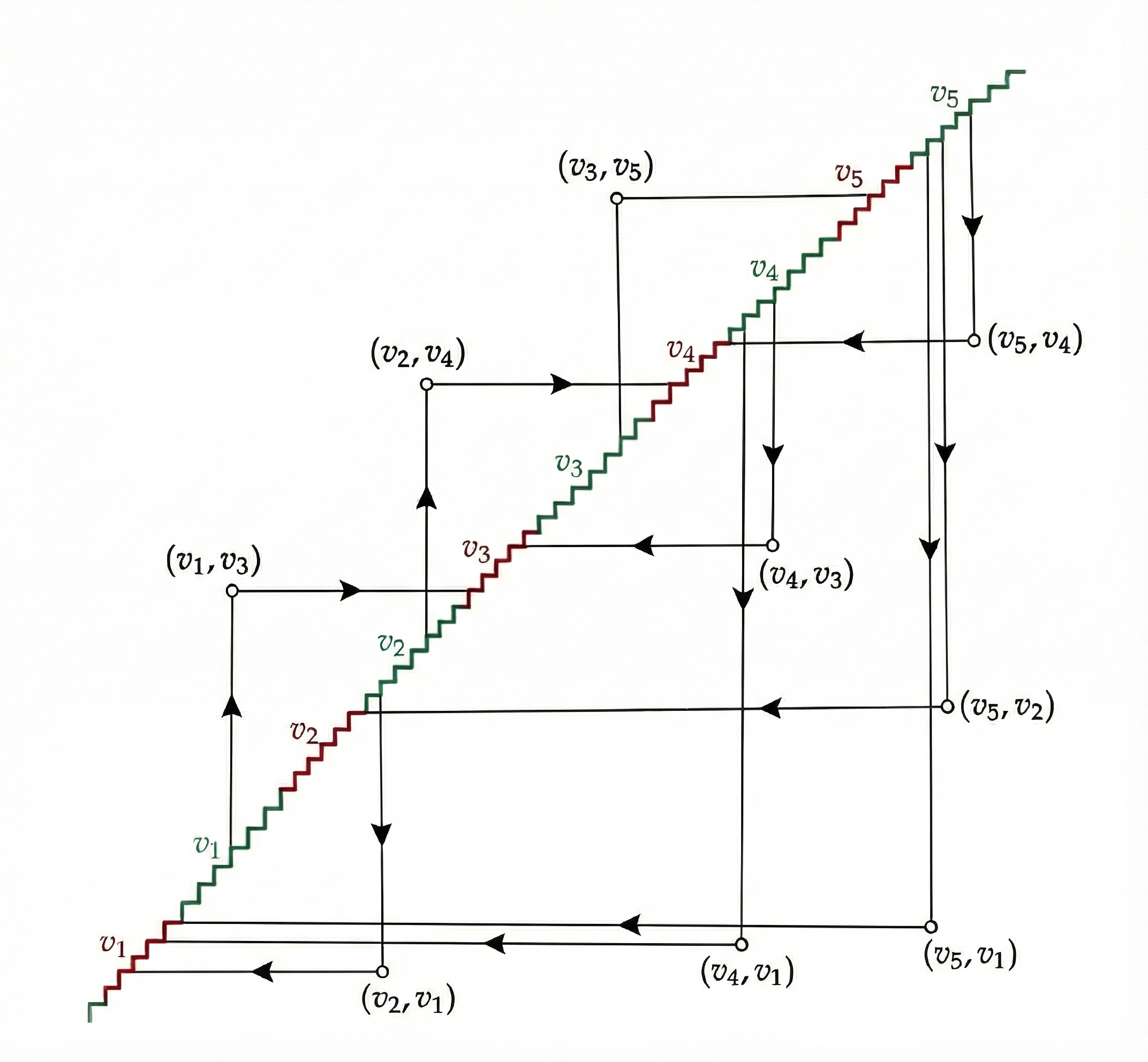}
    \caption{The grid graph $G$ constructed from the directed graph $H$ of \Cref{fig:fvs-example}. Each vertex $v_i$ contributes a red segment (incoming), a black key step $\kappa_i$, and a green segment (outgoing) to the staircase. Edge vertices (the circular dots lying outside the staircase) connect green segments to red segments via orthogonal paths. All other vertices of the grid are inactive (that is, they do not participate in any cycle). (Note that deleting $\kappa_3$ or $\kappa_5$ makes this grid acyclic.)}\label{fig:grid-construction}
\end{figure}

We construct a grid graph $G$ from $H$ as follows. The construction has three types of vertices (see \Cref{fig:grid-construction}):

\begin{enumerate}
    \item \textbf{Staircase vertices:} A diagonal staircase runs from the bottom-left to the top-right corner (edges of the staircase are oriented alternately upwards and rightwards). For each vertex $v_i$ of $H$, the staircase contains a \emph{red segment} (where edges can only enter), followed by a \emph{connector vertex} (shown in black), followed by a \emph{green segment} (where edges can only leave). The connector vertex between the red and green segments of $v_i$ is called the \emph{key vertex} for $v_i$.

    \item \textbf{Edge vertices:} For each edge $(v_i, v_j)$ in $H$, we place an \emph{edge vertex} at a unique grid position.

    \item \textbf{Inactive vertices:} All remaining grid cells contain \emph{inactive vertices}. These are configured as sinks in every row and column tournament, ensuring they never participate in any cycle.
\end{enumerate}

The key insight is that \textbf{each row and column contains at most three active (non-sink) vertices}: at most two from the staircase (where the row/column crosses the diagonal) and at most one edge vertex. This sparse structure allows us to carefully control the edge orientations within each row and column tournament.

A cycle $v_1 \to v_3 \to v_5 \to v_1$ in $H$ corresponds to a cycle in $G$ that traverses:

green $v_1$ $\to$ edge vertex $(v_1, v_3)$ $\to$ red $v_3$ $\to$ (upwards along staircase) $\to$ green $v_3$ $\to$ edge vertex $(v_3, v_5)$ $\to$ red $v_5$ $\to$ (upwards along staircase) $\to$ green $v_5$ $\to$ edge vertex $(v_5, v_1)$ $\to$ red $v_1$ $\to$ (upwards along staircase) $\to$ green $v_1$.

Crucially, deleting the \emph{key vertex} for $v_i$ (the black connector between red $v_i$ and green $v_i$) disconnects all paths through $v_i$, which corresponds to removing $v_i$ from the feedback vertex set in $H$.

\textbf{Formal Construction.} Let $H = (V_H, E_H)$ be a directed graph with $V_H = \{v_1, \ldots, v_n\}$ and $m = |E_H|$ edges. We construct a grid graph $G$ of size $N \times N$ where $N = 4n + m$.

\emph{Staircase structure.} The diagonal of $G$ contains a staircase pattern. For each vertex $v_i \in V_H$, we allocate a segment of the diagonal consisting of:
\begin{itemize}
    \item A \emph{red segment} of $d^-(v_i) + 1$ consecutive diagonal cells, where $d^-(v_i)$ is the in-degree of $v_i$ in $H$.
    \item A single \emph{key vertex} $\kappa_i$.
    \item A \emph{green segment} of $d^+(v_i) + 1$ consecutive diagonal cells, where $d^+(v_i)$ is the out-degree of $v_i$.
\end{itemize}
The segments are arranged along the diagonal in order: red $v_1$, key $\kappa_1$, green $v_1$, red $v_2$, key $\kappa_2$, green $v_2$, and so on.

\emph{Edge vertices.} For each edge $e = (v_i, v_j) \in E_H$, we designate a unique cell $\varepsilon_e$ in the grid. This cell is positioned such that:
\begin{itemize}
    \item It shares a row with exactly one cell from the green segment of $v_i$.
    \item It shares a column with exactly one cell from the red segment of $v_j$.
\end{itemize}
Since each vertex's segment has enough cells to accommodate all incident edges, and each edge uses a distinct cell from each segment, the edge vertices are all distinct and well-defined.

All grid cells not designated as staircase vertices or edge vertices are \emph{inactive vertices}.

\emph{Tournament orientations.} We now specify the edge orientations within each row and column. Let $\mathcal{A}$ denote the set of \emph{active vertices}: all staircase and edge vertices. For each row $R_i$ and column $C_j$:
\begin{itemize}
    \item \textbf{Inactive vertices are sinks:} For every inactive vertex $u$ and every other vertex $w$ in the same row or column, the edge is directed $w \to u$.
    \item \textbf{Staircase edges follow the diagonal:} Within the staircase, edges are directed from lower-left to upper-right: if $(a, a)$ and $(b, b)$ are staircase vertices with $a < b$ in the same row or column, then $(a,a) \to (b,b)$.
    \item \textbf{Edge vertex orientations:} For an edge vertex $\varepsilon_e$ corresponding to edge $(v_i, v_j)$:
    \begin{itemize}
        \item In its row (shared with green $v_i$): the edge goes from the green segment cell to $\varepsilon_e$.
        \item In its column (shared with red $v_j$): the edge goes from $\varepsilon_e$ to the red segment cell.
    \end{itemize}
    \item \textbf{Direct edges between distant active vertices:} If two active vertices share a row or column but are not adjacent in the staircase and neither is an edge vertex connecting them, orient the edge from left-to-right (in rows) or bottom-to-top (in columns).
\end{itemize}

These orientations ensure that each row and column induces an acyclic tournament on its vertices. We now claim that inactive vertices never participate in any cycle of $G$. Indeed, every inactive vertex is a sink with respect to staircase and edge vertices. Further, the subgraph induced by inactive vertices is acyclic by construction. This implies the claim. 

\begin{lemma}\label{lem:cycle-correspondence}
The graph $G$ contains a cycle if and only if $H$ contains a cycle. Moreover, every cycle in $G$ passes through the key vertex $\kappa_i$ for some $v_i \in V_H$.
\end{lemma}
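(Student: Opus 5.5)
Since the inactive vertices are sinks in every row and column, I will first discard them: by the claim just established they lie on no cycle, so it suffices to study the subgraph $G[\mathcal{A}]$ induced by the active vertices. Every edge of $G[\mathcal{A}]$ joins two active vertices sharing a row or column. I will classify these edges from the orientation rules. There are four kinds. The first kind is a staircase edge, oriented lower-left to upper-right. The second is an edge green cell $\to \varepsilon_e$ in the row of $\varepsilon_e$. The third is an edge $\varepsilon_e \to$ red cell in the column of $\varepsilon_e$. The fourth kind consists of the ``direct'' edges between distant active vertices, oriented left-to-right or bottom-to-top.

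\textbf{Step 1 (potential function).} I will attach to each active vertex a position along the staircase and show that every edge other than those entering a red cell from an edge vertex is monotone in it. For a staircase cell $(a,a)$ the position is $a$. For an edge vertex $\varepsilon_e$ with $e=(v_i,v_j)$ I assign the position of its green partner in $v_i$'s segment. Staircase edges and direct edges go lower-left to upper-right, so they increase this position. The edge green $\to \varepsilon_e$ does not decrease it. The only edges that can decrease position are the edges $\varepsilon_e \to$ red cell of $v_j$. Consequently every cycle in $G$ must use at least one such ``back'' edge.

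\textbf{Step 2 (projection of a cycle onto $H$).} I will show that between two consecutive back edges, a cycle must travel monotonically from a red cell of some $v_j$ to a green cell of some vertex, reaching the $\varepsilon$ of the next back edge. The crucial point is that, since every vertex's diagonal block is red, then $\kappa$, then green, any monotone path from the red segment of $v_j$ to any green cell must pass through $\kappa_j$. To establish this I need that no row or column edge jumps across $\kappa_j$ from a red cell of $v_j$ to a cell beyond it. This has to be checked using that each row or column contains at most three active vertices, with the staircase cells in a row or column being diagonally adjacent.

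If the path reaches the green segment of $v_j$ itself, it then leaves via $\varepsilon_{(v_j,v_k)}$, which gives the edge $(v_j,v_k)$ of $H$. If instead the path climbs through later vertex blocks, it passes through their keys, which is only better for the ``moreover'' part. For the converse direction, I concatenate the segments of a cycle $v_{i_1}\to\cdots$ in $H$ exactly as described in the example: red $\to$ (up the staircase) $\kappa$ $\to$ green $\to \varepsilon_e \to$ red. This requires that the red segment of each vertex can climb to $\kappa$ and that the green cell can climb to the designated green partner. Both hold because the staircase edges are oriented upward.

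\textbf{Main obstacle.} The delicate part is the forward direction. A cycle might ascend past several blocks using direct edges and then return via a back edge, producing a cycle in $G$ that does not clearly correspond to a cycle in $H$. I would handle this by projecting each monotone stretch onto the sequence of keys it passes. Every $\kappa$ it passes gives a vertex of $H$, and every back edge from $\varepsilon_{(v_i,v_j)}$ gives an edge of $H$. The task is then to argue that the resulting closed walk in $H$ is genuine. I expect to need that a monotone stretch starting in $v_j$'s red segment can only exit $v_j$'s block through its green segment via some $\varepsilon$. This is where the exact placement of the edge-vertex rows and columns matters, and I would verify it by using that $\varepsilon_e$ shares its row only with green of $v_i$ and its column only with red of $v_j$.
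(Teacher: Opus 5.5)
Your converse direction matches the paper's and is fine. The gap is in the forward direction, at exactly the step you defer: it is not just unverified, it is false for the construction as specified, and so is the implication it is meant to support.

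You need that a monotone stretch entering the red segment of $v_j$ can leave $v_j$'s block only from its green segment through an edge vertex. But the blocks are laid out consecutively on one staircase (red $v_1$, $\kappa_1$, green $v_1$, red $v_2$, \dots). Staircase edges are oriented lower-left to upper-right across block boundaries as well, so the last green cell of $v_j$ has an edge into the first red cell of $v_{j+1}$. A stretch can therefore climb through $\kappa_j,\kappa_{j+1},\dots$. Your projected key sequence can then contain consecutive keys $\kappa_j,\kappa_{j+1}$ although $(v_j,v_{j+1})\notin E_H$, so it is not a closed walk in $H$.

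No sharper argument can rescue this. Take $V_H=\{v_1,v_2\}$ and $E_H=\{(v_2,v_1)\}$, which is acyclic. Let $r$ and $g$ be the red and green partners of $\varepsilon=\varepsilon_{(v_2,v_1)}$. Climb the staircase from $r$ through $\kappa_1$, green $v_1$, red $v_2$ and $\kappa_2$ up to $g$, then follow $g\to\varepsilon\to r$. This is a directed cycle in $G$. The same happens for every edge $(v_i,v_j)$ with $j<i$ in the layout order, so the reduction already fails for $k=0$.

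The paper's own proof does not supply the missing step either. It simply asserts that from $\kappa_i$ the cycle ``proceeds through the green segment of $v_i$, exits via some edge vertex,'' which is the very claim above. The repair therefore has to happen in the construction: consecutive blocks must be decoupled so the staircase cannot lead from green $v_j$ into red $v_{j+1}$. Only for such a corrected construction can an argument of your shape be completed, and there your Step~1 also needs real checking.

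Two staircase cells sharing a row or column are consecutive on the staircase, so every direct edge has an edge vertex as an endpoint. You assign $\varepsilon_e$ the potential of its green partner, not its grid position. Hence ``direct edges go lower-left to upper-right'' says nothing about your potential. Indeed, in the stated construction the following can happen:
\begin{itemize}
\item the first red cell of $v_{i+1}$ can share a row with the last green cell of $v_i$;
\item if that green cell is the green partner of $\varepsilon_{(v_i,v_k)}$ with $k>i+1$, the left-to-right direct edge in that row points from the red cell into $\varepsilon_{(v_i,v_k)}$;
\item this edge lowers your potential although it is not of the form $\varepsilon\to$ red;
\item it lets a path leave $v_{i+1}$'s block without passing $\kappa_{i+1}$.
\end{itemize}
That last point is exactly what your ``no jumping across $\kappa_j$'' check was supposed to exclude.
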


\begin{proof}
$(\Rightarrow)$ Let $C$ be a cycle in $G$. Note that $C$ contains only active vertices. Since staircase edges are directed from lower-left to upper-right, any path along the staircase must eventually leave the staircase (at a green segment) to continue as a cycle. Such a path must use an edge vertex to jump to a later portion of the staircase.

Tracing the cycle $C$: starting from any key vertex $\kappa_i$, the path proceeds through the green segment of $v_i$, exits via some edge vertex $\varepsilon_{(v_i, v_j)}$, enters the red segment of $v_j$, proceeds to $\kappa_j$, and continues. The sequence of key vertices visited corresponds exactly to a cycle in $H$.

$(\Leftarrow)$ Let $v_{i_1} \to v_{i_2} \to \cdots \to v_{i_\ell} \to v_{i_1}$ be a cycle in $H$. We construct a corresponding cycle in $G$: from $\kappa_{i_1}$, follow the staircase through green $v_{i_1}$, take the edge vertex $\varepsilon_{(v_{i_1}, v_{i_2})}$, enter red $v_{i_2}$, proceed to $\kappa_{i_2}$, and continue. This traces a valid directed path that returns to $\kappa_{i_1}$.
\end{proof}

\begin{proof}[Proof of \Cref{thm:grid-acyclification-np-complete}]
Membership in NP is clear: given a set $S$ of at most $k$ vertices, we can verify in polynomial time that $G - S$ is acyclic.

For NP-hardness, we reduce from \textsc{Feedback Vertex Set} on directed graphs, which is NP-complete. Given an instance $(H, k)$ of \textsc{Feedback Vertex Set}, construct the grid graph $G$ as described above. We claim that $H$ has a feedback vertex set of size at most $k$ if and only if $G$ can be made acyclic by deleting at most $k$ vertices.

$(\Rightarrow)$ Suppose $S \subseteq V_H$ is a feedback vertex set of $H$ with $|S| \leq k$. Let $S' = \{\kappa_i : v_i \in S\}$ be the corresponding key vertices in $G$. We claim $G - S'$ is acyclic. By \Cref{lem:cycle-correspondence}, every cycle in $G$ passes through some key vertex $\kappa_i$. If $v_i \in S$, then $\kappa_i \in S'$ is deleted. If $v_i \notin S$, then any cycle through $\kappa_i$ corresponds to a cycle in $H$ through $v_i$, but since $S$ is a feedback vertex set, such a cycle must also pass through some $v_j \in S$, meaning the cycle in $G$ passes through $\kappa_j \in S'$. Thus, all cycles are broken.

$(\Leftarrow)$ Suppose $S'$ is a set of at most $k$ vertices in $G$ such that $G - S'$ is acyclic. We construct a feedback vertex set $S$ for $H$ with $|S| \leq k$ using a \emph{pushing argument}.

First, recall that we may assume $S'$ contains no inactive vertices (removing an inactive vertex from $S'$ cannot create new cycles).

For each vertex $u \in S'$, we ``push'' it to a nearby key vertex:
\begin{itemize}
    \item If $u = \kappa_i$ is already a key vertex, include $v_i$ in $S$.
    \item If $u$ is in the red segment of $v_i$, include $v_i$ in $S$. (Any cycle through $u$ must continue to $\kappa_i$, so deleting $\kappa_i$ also breaks this cycle.)
    \item If $u$ is in the green segment of $v_i$, include $v_i$ in $S$. (Any cycle through $u$ must have come from $\kappa_i$.)
    \item If $u = \varepsilon_{(v_i, v_j)}$ is an edge vertex, include $v_i$ in $S$. (Any cycle through $u$ must pass through either $\kappa_i$ or $\kappa_j$; we choose $v_i$.)
\end{itemize}

This maps each vertex in $S'$ to a vertex in $S$ (possibly with repetitions), so $|S| \leq |S'| \leq k$. We claim $S$ is a feedback vertex set. Suppose for contradiction that $H - S$ contains a cycle $v_{i_1} \to \cdots \to v_{i_\ell} \to v_{i_1}$. By \Cref{lem:cycle-correspondence}, $G$ contains a corresponding cycle through $\kappa_{i_1}, \ldots, \kappa_{i_\ell}$. Since $v_{i_j} \notin S$ for all $j$, no key vertex $\kappa_{i_j}$ is the image of any vertex in $S'$ under our pushing map. But then the original set $S'$ did not break this cycle in $G$, contradicting that $G - S'$ is acyclic.

The construction of $G$ from $H$ is polynomial in $|V_H| + |E_H|$, completing the reduction.
\end{proof}

\section{Concluding Remarks}

We have presented a comprehensive study of permutation match puzzles, a family of constraint satisfaction problems on grids inspired by a physical puzzle. Our main contributions include a complete characterization of solvability via the acyclicity of an associated constraint graph, which reduces to an elegant ``at most one switch'' condition on the row and column labels. When solutions exist, we derived an exact counting formula using hook lengths, connecting these puzzles to the rich combinatorics of standard Young tableaux. For unsolvable instances, we provided a linear-time algorithm to compute the minimum number of label flips required to restore solvability. These results extend naturally to three-dimensional grids and partially filled puzzles, where the solvability criterion generalizes in a clean manner.

Several directions remain open for further investigation. First, while we established NP-completeness for the generalized setting where rows and columns specify arbitrary permutations, it would be interesting to identify tractable special cases --- for instance, when the permutations come from a restricted family or when the constraint graph has bounded treewidth. Second, our results focus on complete grids; understanding the complexity of partial filling problems where only a subset of cells must be filled could lead to connections with constraint propagation and arc consistency. Third, the hook length formula suggests deeper algebraic structure; exploring whether representation-theoretic techniques can yield further insights into the solution space would be worthwhile. The interplay between simple local constraints and global structure makes permutation match puzzles a fertile ground for both recreational mathematics and algorithmic exploration.



\bibliography{lipics-v2021-sample-article}

\end{document}